\documentclass[conference,compsoc]{IEEEtran}
\ifCLASSOPTIONcompsoc
  \usepackage[nocompress]{cite}
\else
  \usepackage{cite}
\fi
\ifCLASSINFOpdf
\else
\fi
\usepackage{amsmath}
\usepackage{url}

\usepackage[available,functional,reproduced]{ieeebadges}

\usepackage{xcolor}
\usepackage{tabularx}
\usepackage{array}
\usepackage{multirow}
\usepackage{threeparttable}
\usepackage{xspace}
\usepackage{lipsum}
\usepackage{amsthm}
\usepackage{booktabs}
\usepackage{makecell}
\usepackage{amssymb}
\usepackage{graphicx}
\usepackage{hyperref}
\newcolumntype{C}{>{\centering\arraybackslash}X}
\newcommand{\1}{\mathbf{1}}                      
\newcommand{\amax}{\hat{x}_t}                      
\newtheorem{theorem}{Theorem}

\newtheorem{definition}{Definition}
\newtheorem{proposition}{Proposition}
\newtheorem{corollary}{Corollary}

\newcommand{\sys}{{\textsc{SecureSD}}\xspace}

\newenvironment{icompact}{
  \begin{list}{$\bullet$}{
    \itemindent -.05em
    \parsep 0pt plus 1pt
    \partopsep 0pt plus 1pt
    \topsep 2pt plus 2pt minus 2pt
    \itemsep 0pt plus 1.3pt
    \parskip 0pt plus 2pt
    \leftmargin 0.13in}
      }
{\normalsize
\end{list}
}

\usepackage[most]{tcolorbox}

\newcounter{takeaway}

\newenvironment{takeaway}{
  \refstepcounter{takeaway}
  \begin{tcolorbox}[
    colback=gray!10,
    colframe=gray!10,
    boxrule=0pt,
    arc=0pt,
    left=2pt,
    right=2pt,
    top=2pt,
    bottom=2pt,
    boxsep=0pt,
    breakable
  ]
  \noindent\textbf{Takeaway~\thetakeaway.}
}{
  \end{tcolorbox}
}
\usepackage{subcaption}
\usepackage[ruled,linesnumbered,vlined]{algorithm2e}

\makeatletter
\renewcommand\paragraph{%
  \@startsection{paragraph}{4}{\z@}%
  {1.15ex plus .2ex minus .2ex}
  {-0.3em}
  {\normalfont\normalsize\bfseries}}
\makeatother

\usepackage{todonotes}
\tcbuselibrary{raster}
\definecolor{orange}{RGB}{255,127,0}

\begin{document}
%
\title{Secure Speculative Decoding for Large Language Models}


\author{
\rm{Yichi Zhang$^1$,
Zhiqi Wang$^1$,
Neil Zhenqiang Gong$^2$,
Yuchen Yang$^1$}\\
$^1$The Pennsylvania State University,
$^2$Duke University \\
\{yichi.zhang, zhiqi.wang, yuchen.yang\}@psu.edu, neil.gong@duke.edu
}


%


\maketitle

\begin{abstract}

Speculative decoding accelerates inference for a large language model (LLM), referred to as the \emph{target model}, by first using a smaller model, referred to as the \emph{draft model}, to generate candidate tokens and then verifying them with the target model for acceptance or rejection. Prior studies primarily focused on the efficiency-utility trade-off of speculative decoding, e.g., lossy speculative decoding, leaving its security implications largely unexplored. 

In this work, we bridge this gap by providing the \emph{first} systematic study of the security implications of speculative decoding. Through a large-scale measurement study, we reveal a pronounced security-utility asymmetry: across a wide range of lossy speculative decoding methods, improvements in inference efficiency come at a disproportionately high cost to security, with attack success rates for jailbreak and prompt injection attacks increasing much faster than utility degrades.

We then propose \sys, a new theory-guided speculative decoding method that enhances security while maintaining efficiency and utility. Specifically, our theoretical analysis reveals that security degradation primarily originates from the early tokens generated by the draft model. Motivated by this insight, \sys applies a stricter verification criterion to draft-model tokens at early decoding positions. Extensive experiments on both security and utility benchmarks demonstrate that \sys significantly improves security while preserving efficiency and utility compared to existing speculative decoding methods.\footnote{The code and benchmarks for \sys are publicly available at \url{https://github.com/YichiCS/Secure-Speculative-Decoding}.}

\end{abstract}

%
\IEEEpeerreviewmaketitle

\section{Introduction}

As LLMs are increasingly deployed in interactive applications, autoregressive decoding has become a major latency bottleneck because each output token typically requires a separate forward pass through a large model. 
As shown in Algorithm~\ref{alg:specdec}, speculative decoding addresses this bottleneck by using a smaller \emph{draft model} to quickly generate candidate tokens for a larger LLM, referred to as the \emph{target model}; the target model then verifies these candidates in parallel and either accepts them or rejects them before continuing generation~\cite{leviathan2023fast,chen2023accelerating}.
By accepting multiple tokens in a single decoding step, speculative decoding can substantially reduce decoding latency. This technique has moved beyond research prototypes: Google reports applying speculative decoding in multiple products, including AI Overviews in Google Search~\cite{google_spec_decode_2024} and Gemma 4 models~\cite{google_gemma4_mtp}. Production-oriented serving stacks such as vLLM~\cite{vllm_spec_decode}, NVIDIA Triton/TensorRT-LLM~\cite{nvidia_triton_tensorrtllm_spec_decode}, Hugging Face Inference~\cite{hf_tgi_speculation}, and SGLang~\cite{sglang_spec_decode} also provide built-in speculative decoding support.

Existing work on speculative decoding has mainly focused on improving inference efficiency while preserving utility. 
Recent methods increase speedup by improving draft models~\cite{zhou2024distillspec, zafrir2025fastdraft, liu2024online, hu2026adaspec, li2026eagle}, using richer proposal structures~\cite{cai2024medusa, ankner2024hydra, miao2024specinfer, li2024eagle}, or adopting lossy verification strategies~\cite{bachmann2025judge, lu2025adasd, wang2025alignment, wang2026diversed} that accept more draft tokens per decoding step. 
Some methods further relax the verification criterion~\cite{kim2023speculative, narasimhan2025faster, holsman2025fuzzy, li2026training, song2026mars}, explicitly trading exact target-model behavior for higher throughput. 
However, these methods are still primarily evaluated through the efficiency--utility trade-off, using metrics such as speedup, average acceptance length, perplexity, task accuracy, or human preference. 
\emph{To the best of our knowledge, the safety and security implications of efficiency-oriented speculative decoding under adversarial inputs have not been systematically studied.} This gap matters because relaxed verification may accept draft tokens that the target model would otherwise reject. 
In safety settings, e.g., jailbreak~\cite{zou2023universal, jiang2024wildteaming, chao2024jailbreakbench, mazeika2024harmbench}, and security settings, e.g., prompt injection~\cite{liu2024formalizing}, such early deviations can steer the response from refusal toward compliance, causing safety and security risks that remain invisible to utility benchmarks.
For simplicity, we use \emph{security} to refer to both throughout the introduction. 

In this paper, we bridge this gap by providing the \emph{first} systematic study of the security implications of lossy speculative decoding. 
Across a range of speculative decoding methods, making token acceptance more aggressive improves inference efficiency with only modest degradation on standard utility benchmarks, but causes a disproportionately large increase in attack success rates on security benchmarks. 
In particular, jailbreak and prompt injection attacks become much more successful even when task accuracy, output quality, or other utility metrics remain relatively stable. 
This shows that security degradation is not simply another form of utility degradation: lossy speculative decoding can introduce small changes in the generation trajectory that are largely tolerated by general-purpose benchmarks but have much larger effects on security-critical behavior. 

To understand this asymmetry, we analyze how accepted draft tokens affect autoregressive generation. Our analysis shows that security degradation depends on two dimensions: (i) how far the accepted tokens deviate from the target model's original distribution, and (ii) at which token position these deviations occur in the output sequence. 
In particular, draft tokens accepted at early decoding positions have a disproportionate effect because they set the initial direction of the response. For adversarial prompts, an early deviation can shift generation from refusal toward compliance. This positional sensitivity explains why security can degrade much faster than utility.

Motivated by this insight, we propose \sys, a theory-guided and position-aware speculative decoding method that applies stricter verification to early draft tokens while retaining efficient verification later in the generation process. 
By correcting only the positions most responsible for security degradation, \sys avoids unnecessary verification overhead across the full response. Experiments on five utility and security benchmarks show that \sys reduces attack success rates of jailbreak and prompt injection attacks by up to 92.4\%, while retaining 99.8\% of the decoding speedup and keeping utility within 98.4\% of existing speculative decoding methods. Ablation studies further confirm that focusing verification on early positions is key to improving the security--utility--efficiency trade-off.

\paragraph{Contribution} We make the following contributions:
\begin{icompact}
    \item We provide the \emph{first} systematic study of the security implications of lossy speculative decoding, revealing a strong \textbf{security--utility asymmetry}: attack success rates rise much faster than utility degrades.

    \item We explain this asymmetry through theoretical and empirical analysis, showing that early draft tokens play an important role in security degradation.

    \item We propose \sys, a theory-guided, position-aware speculative decoding method that improves security while preserving efficiency and utility.
\end{icompact}

\section{Related Work}

\begin{algorithm}[t]
\caption{Vanilla Speculative Decoding}
\label{alg:specdec}
\KwIn{Draft model $p$, target model $q$, prefix $x_{1:t}$, speculation length $k$}
\KwOut{Tokens sampled from speculative decoding}
\For{$i = 1$ \KwTo $k$}{
    $p_i \gets p(\cdot \mid x_{1:t}, \tilde{x}_{1:i-1})$\;
    $\tilde{x}_i \sim p_i$\;
}
$\{q_i\}_{i=1}^{k+1} \gets \textsc{TargetForward}(x_{1:t}, \tilde{x}_{1:k})$\;

\For{$i = 1$ \KwTo $k$}{
    $r \sim \mathcal{U}(0,1)$\;
    \uIf{$r \le \min\{1,\, q_i(\tilde{x}_i)/p_i(\tilde{x}_i)\}$}{
        $x_{t+i} \gets \tilde{x}_i$\;
    }
    \Else{
        $x_{t+i} \sim
        \dfrac{(q_i-p_i)_{+}}
        {\sum_{x}(q_i(x)-p_i(x))_{+}}$\;
        \Return $x_{t+1},\ldots,x_{t+i}$\;
    }
}
$x_{t+k+1} \sim q_{k+1}$\;
\Return $x_{t+1},\ldots,x_{t+k+1}$\;
\end{algorithm}

\subsection{Speculative Decoding}
Speculative decoding~\cite{chen2023accelerating, leviathan2023fast} accelerates LLM generation by using a small draft model to propose candidate tokens and a larger target model to verify them in parallel, reducing expensive target-model computation while strictly preserving the target model's behavior. As shown in Algorithm~\ref{alg:specdec}, each decoding round proceeds in two phases: the draft model first autoregressively proposes $k$ candidate tokens, and the target model then computes their probabilities in a single parallel forward pass. Each candidate $\tilde{x}_i$ is accepted with probability $\min\{1,\, q_i(\tilde{x}_i)/p_i(\tilde{x}_i)\}$; upon the first rejection, a replacement token is sampled from the normalized residual distribution $(q_i-p_i)_{+}$ and the round terminates. If all $k$ candidates pass, one extra token is sampled from $q_{k+1}$, so each round commits between $1$ and $k+1$ tokens whose distribution provably matches sampling from the target model directly.

Recent works~\cite{kim2023speculative, narasimhan2025faster, holsman2025fuzzy, li2026training, song2026mars} further improve efficiency by relaxing the verification step. Instead of strictly preserving the target distribution, these methods accept more draft tokens when they appear sufficiently close to what the target model would have produced. This increases the acceptance rate and decoding speed with a small distributional shift from the original target model.

Specifically, LossySD~\cite{leviathan2023fast} relaxes the verification rule, allowing more draft tokens to pass verification under a controlled approximation. 
BiLD~\cite{kim2023speculative} uses model confidence to decide when to continue relying on the draft model and when to roll back to the target model. 
SpecCascade~\cite{narasimhan2025faster} formulates verification as a cascade decision, determining when the system can trust the draft model and when it should defer to the target model. 
FSD~\cite{holsman2025fuzzy} is designed to accept a draft token when the draft and target distributions are sufficiently similar. 
FLy~\cite{li2026training} uses uncertainty in the target distribution as a signal for relaxation, permitting more relaxation when the target model is confident enough.
MARS~\cite{song2026mars} relies on the structure of the target model's logits to determine whether a draft token is consistent with the target model's predictions. 

\emph{However, these works mainly focus on optimizing the efficiency--utility trade-off, leaving unclear how the distributional shifts introduced by relaxed verification affect security. We bridge this research gap in this paper.}

\subsection{Attacks on LLMs}

\paragraph{Jailbreak Attacks}
Jailbreak attacks are designed to guide LLMs to circumvent their safety rules and respond to queries that would otherwise be refused. Malicious users can craft natural-language prompts that reframe harmful requests through roleplay, fictional scenarios, or privilege escalation, such as ``Do-Anything-Now'' prompts~\cite{shen2024anything}, ``developer mode'' prompts, and character-roleplay prompts~\cite{liu2024hitchhiker}. More recent work formalizes these methods through persuasion~\cite{zeng2024johnny}, highlighting that LLMs are vulnerable to natural language prompting techniques.

Another line of jailbreak work relies on adaptive prompt-crafting algorithms~\cite{zou2023universal, liu2024autodan, anil2024many, chao2025jailbreaking} that use feedback from the target model. AutoDAN~\cite{liu2024autodan} optimizes human-crafted jailbreak prompts with a genetic algorithm by iteratively querying the target LLM and improving the prompt. Greedy Coordinate Gradient (GCG)~\cite{zou2023universal} optimizes an adversarial suffix to increase the likelihood of a compliant response from the target model, making it an iterative attack.

\paragraph{Prompt Injection Attacks}
Prompt injection attacks target the instruction-following behavior of LLM-integrated applications. In these attacks, an adversary supplies additional instructions that conflict with the intended task or higher-level system instructions~\cite{greshake2023not, perez2022ignore, liu2024formalizing, geng2026piarena}. Earlier work studies direct prompt injection~\cite{perez2022ignore}, where a user attempts to override the high-level instructions of an LLM-based application with commands such as ``ignore previous instructions.'' Later work shows that malicious instructions can also be planted in external sources, such as webpages, documents, or tool outputs, and then hijack a session between the benign user and LLM-based agents. This setting is known as indirect prompt injection~\cite{liu2024formalizing, geng2026piarena, greshake2023not}. Liu et al.~\cite{liu2024formalizing} formalize in-the-wild prompt injection techniques into a unified framework and show that combining existing techniques improves attack success rates. Adaptive or automatic jailbreak algorithms~\cite{zou2023universal, chao2025jailbreaking} can also be used to generate successful prompt injection prompts.

\textit{In this paper, we use jailbreak and prompt injection benchmarks to quantify the underexplored security--utility asymmetry introduced by relaxed speculative decoding: utility can remain largely stable while security degrades rapidly relative to the target model's security performance.}

\section{Threat Model}

\paragraph{System Model}
We consider a realistic deployment scenario in which the LLM infrastructure integrates speculative decoding with relaxed verification for further speedup. The system uses a large target model to generate responses and a cheaper draft model to propose candidate tokens. For the same input $X$, we compare the output of an LLM system with relaxed speculative decoding against that of an otherwise identical system with vanilla (i.e., lossless) speculative decoding. Our goal is to measure how relaxed verification changes the system's behavior relative to the vanilla baseline, and how much target-model performance it retains, treating the target model as the upper bound.

We focus on lossy speculative decoding systems that use relaxed verification to trade exact target-model behavior for throughput. Mainstream serving frameworks, including vLLM, NVIDIA TensorRT-LLM, and SGLang, support this setting~\cite{vllm_spec_decode,nvidia_triton_tensorrtllm_spec_decode,sglang_spec_decode}. We quantify relaxation by acceptance rate (AR) and evaluate its practical regime in Section~\ref{sec:measurement}.

\paragraph{Attacker's Objectives} 
We use adversarial objectives to define the safety and security stress-test setting. For a given query $X$, it produces an incorrect or unintended response $Y$. We primarily consider two attack scenarios. 
\begin{icompact}
    \item \textit{Jailbreak Attacks.} The attacker carefully crafts a query $X$ to induce the LLM system to generate responses $Y$ that violate human values or established safety policies.

    \item \textit{Prompt Injection Attacks.} The attacker carefully crafts a query $X$ to manipulate the LLM system such that it produces an unintended response $Y$ or causes a downstream application to execute potentially harmful actions $Y$.
\end{icompact}

\paragraph{Attacker's Capabilities}

We assume a black-box threat model in which the attacker can interact with the LLM system only through its public API. Specifically, the attacker can issue queries to an LLM system that employs relaxed speculative decoding and observe the generated responses. 

\paragraph{Attacker's Knowledge}

We assume that the attacker possesses partial knowledge of the target model, such as the target LLM's family or type (e.g., GPT-5.5). However, we do not assume knowledge of the exact relaxed speculative decoding configuration, including the draft model, verification rule, relaxation threshold, or decoding randomness. Since the same inputs are evaluated under the target model with and without relaxed speculative decoding, the measured security gap reflects the effect of relaxed verification rather than an exploitation of internal implementation details.

\begin{table*}[!t]
  \caption{\textbf{Average performance retention rate (PRR) and speedup (SPD) of lossless and six lossy speculative decoding methods across security and utility benchmarks.} For each speculative decoding method, we sweep its method-specific hyperparameters $\theta$ to obtain different ARs. We then report the average PRR and SPD as a function of AR. 
  We also report the PRR and SPD of lossless speculative decoding for comparison with the average performance of lossy speculative decoding.
  }
  \vspace{-0.1in}
  \label{tab:measurement_baselines}
  \begin{center}
    \begin{tabularx}{\textwidth}{rl*{8}{>{\centering\arraybackslash}X}}
      \toprule
      \multirow{4}{*}{Relaxed Verifier}
        & \multirow{4}{*}{Model Family}
        & \multicolumn{4}{c}{Safety/Security Benchmarks}
        & \multicolumn{4}{c}{Utility Benchmarks} \\
        \cmidrule(lr){3-6}\cmidrule(lr){7-10}
        &
        & \multicolumn{2}{c}{Jailbreak-SD}
        & \multicolumn{2}{c}{OpenPromptInjection}
        & \multicolumn{2}{c}{HumanEval}
        & \multicolumn{2}{c}{GSM8K} \\
      \cmidrule(lr){3-4}\cmidrule(lr){5-6}\cmidrule(lr){7-8}\cmidrule(lr){9-10}
        & & PRR$\uparrow$ & SPD$\uparrow$
        & PRR$\uparrow$ & SPD$\uparrow$
        & PRR$\uparrow$ & SPD$\uparrow$
        & PRR$\uparrow$ & SPD$\uparrow$ \\
      \midrule
      \multirow{2}{*}{LossySD~\cite{leviathan2023fast}}
        & Qwen3-0.6B/8B & .708 & 1.68 & .657 & 1.56 & .834 & 1.89 & .925 & 1.91 \\
        & Llama3-1B/8B  & .693 & 1.84 & .763 & 1.70 & .803 & 1.96 & .878 & 2.03 \\
      \midrule
      \multirow{2}{*}{BiLD~\cite{kim2023speculative}}
        & Qwen3-0.6B/8B & .674 & 1.50 & .664 & 1.35 & .856 & 1.63 & .945 & 1.74 \\
        & Llama3-1B/8B  & .568 & 1.59 & .718 & 1.46 & .748 & 1.73 & .903 & 1.71 \\
      \midrule
      \multirow{2}{*}{SpecCascade~\cite{narasimhan2025faster}}
        & Qwen3-0.6B/8B & .566 & 1.60 & .553 & 1.45 & .771 & 1.79 & .831 & 1.83 \\
        & Llama3-1B/8B  & .651 & 1.76 & .720 & 1.58 & .781 & 1.87 & .855 & 1.90 \\
      \midrule
      \multirow{2}{*}{FSD~\cite{holsman2025fuzzy}}
        & Qwen3-0.6B/8B & .927 & 1.60 & .759 & 1.56 & .865 & 1.58 & .940 & 1.59 \\
        & Llama3-1B/8B  & .610 & 1.85 & .734 & 1.76 & .850 & 1.64 & .945 & 1.50 \\
      \midrule
      \multirow{2}{*}{FLy~\cite{li2026training}}
        & Qwen3-0.6B/8B & .826 & 1.71 & .694 & 1.58 & .935 & 1.91 & .951 & 1.96 \\
        & Llama3-1B/8B  & .752 & 1.90 & .752 & 1.79 & .839 & 1.98 & .915 & 2.06 \\
      \midrule
      \multirow{2}{*}{MARS~\cite{song2026mars}}
        & Qwen3-0.6B/8B & .993 & 1.13 & .872 & 1.13 & .985 & 1.78 & .998 & 1.86 \\
        & Llama3-1B/8B  & .817 & 1.48 & .845 & 1.46 & .863 & 1.91 & .996 & 1.88 \\
      \midrule
      \multirow{2}{*}{\textbf{Average Lossy SD}}
        & Qwen3-0.6B/8B & .782 & 1.54 & .700 & 1.44 & .874 & 1.76 & .932 & 1.82 \\
        & Llama3-1B/8B  & .682 & 1.74 & .755 & 1.63 & .814 & 1.85 & .915 & 1.85 \\
      \midrule
      \multirow{2}{*}{\textbf{Lossless SD}}
        & Qwen3-0.6B/8B & 1.019 & 1.05 & .915 & 1.06 & .947 & 1.65 & 1.000 & 1.81 \\
        & Llama3-1B/8B  & .909 & 1.33 & .983 & 1.36 & .933 & 1.85 & .976 & 1.76 \\
      \bottomrule
    \end{tabularx}
  \end{center}
        \vspace{-0.15in}
\end{table*}

\section{Measurement: Security--Utility Asymmetry}
\label{sec:measurement}

We begin with a measurement study of how speculative decoding affects security performance as the token acceptance strategy becomes more aggressive. 

\subsection{Preliminaries on Speculative Decoding}

Speculative decoding methods share the same high-level draft-then-verify paradigm, but they differ in how the verifier decides whether to accept a proposed draft token and how generation proceeds after rejection. To compare these methods under a common framework, we first define a unified verifier abstraction.

\begin{definition}[Speculative Decoding Verifier]
\label{def:relaxed-sd-verifier}
A speculative decoding verifier can be represented as $S(q,p,\alpha,R)$, where the system operates over a sequence of draft positions and sequentially decides whether to accept or reject each draft token. At position $t$, conditioned on the current prefix $x_{<t}$, the target model and the draft model define the following conditional distributions:
\begin{align}
q_t(x) &= q(x_t = x \mid x_{<t}), \\
p_t(x) &= p(x_t = x \mid x_{<t}).
\end{align}
Here, $q_t$ specifies the target distribution that the decoding procedure aims to preserve, while $p_t$ specifies the draft distribution for efficient token generation.

Given a draft token $x$ at position $t$, the system applies an acceptance
function
\begin{align}
\alpha_t(x): \mathcal{V} \rightarrow [0,1],
\end{align}
which determines the probability with which the proposed token is accepted under the current prefix. If the proposed token is rejected, the system samples a replacement token from a recovery distribution
\begin{align}
R_t(x) &= R(x_t = x \mid x_{<t}).
\end{align}
Here, $R_t$ specifies the distribution used to recover from rejection and continue generation.
\end{definition}

This verifier definition can be broadly applied to current frameworks such as LossySD~\cite{leviathan2023fast}, BiLD~\cite{kim2023speculative}, SpecCascade~\cite{narasimhan2025faster}, FSD~\cite{holsman2025fuzzy}, FLy~\cite{li2026training}, and MARS~\cite{song2026mars}. 
We show how a speculative decoding method can be characterized by these four parameters in Table~\ref{tab:relaxed_verifiers}. Tuning the hyperparameters of this method allows us to modulate the aggressiveness of the corresponding token-acceptance strategies.

\begin{figure*}[t]
    \centering
    \includegraphics[width=\linewidth]{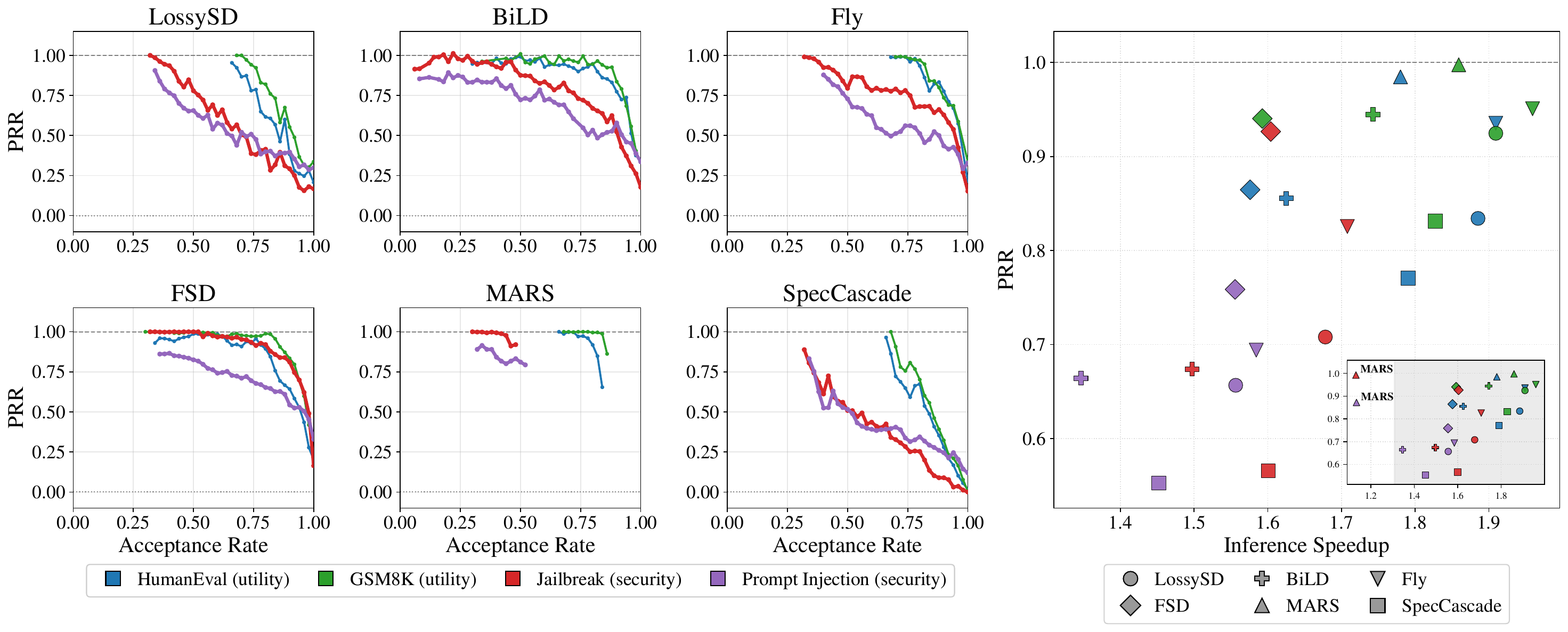}
    \caption{\textbf{Security--Utility Asymmetry under Speculative Decoding.} The left panel shows the PRR-AR curves of six speculative decoding methods on four benchmarks. The right panel shows the trade-off between PRR and SPD. The separation between the curves and scatter points for security and utility tasks reveals the security--utility asymmetry.}
    \label{fig:security_utility_asymmetry_qwen}
      \vspace{-0.10in}
\end{figure*}

\subsection{Experimental Setup}
\label{sec:measurement_es}

\paragraph{Benchmarks}
We evaluate the system on utility, safety, and security benchmarks.
For utility evaluation, we use two benchmarks commonly adopted in speculative decoding studies: 164 Python programming problems from HumanEval~\cite{chen2021evaluating} to assess code generation, and 200 problems sampled from GSM8K~\cite{cobbe2021training} to assess mathematical reasoning.
We consider jailbreak for safety settings and prompt injection for security settings. For jailbreak evaluation, we construct a benchmark named Jailbreak-SD of 200 examples from four widely used datasets: WildJailbreak~\cite{jiang2024wildteaming}, JailbreakBench~\cite{chao2024jailbreakbench}, HarmBench~\cite{mazeika2024harmbench}, and AdvBench~\cite{zou2023universal}. 
For prompt injection, we use OpenPromptInjection~\cite{liu2024formalizing} to evaluate general prompt injection attacks. We sample two instructions from different sources across seven task datasets and combine them using five attack strategies, resulting in a benchmark of 200 samples, with 40 samples for each attack strategy.
We also conduct a full evaluation on AgentDojo~\cite{debenedetti2024agentdojo} to assess more realistic prompt injection scenarios in long-context tasks.

\begin{table}[t]
\centering
\setlength{\tabcolsep}{3pt}
\caption{\textbf{Instantiation of relaxed verifiers.}
We instantiate six speculative decoding methods with relaxed verification under Definition~\ref{def:relaxed-sd-verifier}.}
\label{tab:relaxed_verifiers}
\begin{threeparttable}
\begin{tabularx}{\columnwidth}{@{}r c c C c@{}}
\toprule
\textbf{Method} & \textbf{$q_t$} & \textbf{$p_t$} & \textbf{$\alpha_t(x)$} & \textbf{$R_t$} \\
\midrule
LossySD~\cite{leviathan2023fast}
  & $q_t$ & $p_t$
  & $\min\{1, q_t(x)/p_t(x)+\epsilon\}$
  & $\rho_t(q_t;x)$\tnote{1} \\
\midrule
BiLD~\cite{kim2023speculative}
  & $q_t$ & $p_t$
  & \makecell[c]{$\1[\max_v p_t(v)\ge \mathrm{fb}]$\\
    $\cdot \1[q_t(x)\ge e^{-\mathrm{rb}}\lor x=\amax]$}
  & $\delta_{\amax}$ \\
\midrule
SpecCascade~\cite{narasimhan2025faster}
  & $\pi_t^r$ & $p_t$
  & $\min\{1, \pi_t^r(x)/p_t(x)\}$
  & $\rho_t(\pi_t^r;x)$ \\
\midrule
FSD~\cite{holsman2025fuzzy}
  & $q_t$ & $p_t$
  & \makecell[c]{$\1[D_r(q_t\|p_t)\le\tau]$}
  & $q_t$ \\
\midrule
FLy~\cite{li2026training}
  & $q_t$ & $p_t$
  & \makecell[c]{$\1[x=\amax]$\\
    $\lor\1[\tilde H(q_t)\ge\theta \wedge \neg \mathrm{win}^{W}_t]$}
  & $q_t$ \\
\midrule
MARS~\cite{song2026mars}
  & $q_t$ & $p_t$
  & \makecell[c]{$\1[x=\amax]$\\
    $\lor\1[x=\mathrm{top\text{-}2}(q_t)\wedge z_2/z_1>\tau]$}
  & $\delta_{\amax}$ \\
\bottomrule
\end{tabularx}
\begin{tablenotes}[flushleft] \footnotesize \item[1] $\rho_t(\mu;x)= (\mu(x)-p_t(x))_+/\sum_v(\mu(v)-p_t(v))_+$ denotes the positive-residual recovery distribution, where $(f)_+=\max\{f,0\}$. \end{tablenotes}
\end{threeparttable}
      \vspace{-0.15in}
\end{table}

\paragraph{Metrics}
We evaluate speculative decoding using three metrics that capture its efficiency, utility, and security.
\begin{icompact}
    \item \emph{Acceptance Rate (AR).}
    AR measures acceptance aggressiveness as the fraction of draft tokens accepted. We sweep the method-specific hyperparameters in Table~\ref{tab:relaxed_verifiers} to obtain AR values from conservative verification to aggressive acceptance, forming an approximately continuous scale between 0 and 1.

    \item \emph{Performance Retention Rate (PRR).}
    To compare target-model performance retention across tasks with different score scales, we orient all \emph{raw scores} so that \textbf{higher is better}: Pass@1 for HumanEval, Accuracy for GSM8K, attack failure rate (AFR) for Jailbreak-SD and OpenPromptInjection, and utility under attack (UUA) for AgentDojo.
    We use the official evaluators to compute Pass@1, Accuracy, and UUA. For AFR, we follow the StrongREJECT~\cite{souly2024strongreject} setting and use GPT-4o-mini as an LLM-as-a-judge~\cite{gu2026survey} to determine whether the attack succeeds.
    Let $x$ be the raw score of a certain speculative decoding method, and let $x_{\text{target}}$ and $x_{\text{draft}}$ be the raw scores of the target and draft models under standard decoding. We define
    \begin{align}
    \mathrm{PRR} = \frac{x - x_{\text{draft}}}{x_{\text{target}} - x_{\text{draft}}},
    \end{align}
    which measures the fraction of the target performance retained by speculative decoding: $\mathrm{PRR}=0$ matches the draft model, and $\mathrm{PRR}=1$ matches the target model. We also report raw scores for direct interpretation in Table~\ref{tab:raw_model_performance}.

    \item \emph{Inference Speedup (SPD).}
    SPD is the ratio between the wall-clock decoding time of the target model under standard decoding and that of speculative decoding. 
    Larger values indicate faster inference, e.g., $\mathrm{SPD}=2$ means that speculative decoding achieves a $2\times$ speedup against standard autoregressive decoding of the target model.

\end{icompact}

\paragraph{Speculative Decoding Methods}
We consider the lossless method and six lossy speculative decoding methods: LossySD~\cite{leviathan2023fast}, BiLD~\cite{kim2023speculative}, FLy~\cite{li2026training}, FSD~\cite{holsman2025fuzzy}, MARS~\cite{song2026mars}, and SpecCascade~\cite{narasimhan2025faster}. 
We use a default speculation length of $k=5$, meaning that the draft model proposes five candidate tokens at each decoding step. Unless otherwise specified, we use a generation batch size of 32 and set the decoding temperature to 0, corresponding to greedy decoding. These are widely adopted default settings.
For notational simplicity, we use $\theta$ to abstractly denote the tunable hyperparameters of each speculative decoding method.

\paragraph{Models}
We evaluate four draft-target model pairs. We pair Qwen3-0.6B (i.e., draft model) with Qwen3-8B and Qwen3-32B as target models, and Llama-3.2-1B-Instruct (i.e., draft model) with Llama-3.1-8B-Instruct and Llama-3.3-70B-Instruct as target models.
We report the standard evaluation metrics of these models on the four benchmarks in Table~\ref{tab:raw_model_performance}, and use these results to compute PRR in the subsequent sections.

\begin{table}[t]
\centering
\footnotesize
\setlength{\tabcolsep}{8pt}
\caption{\textbf{Raw Model Performance.} We report the task performance of four models across five benchmarks, highlighting the performance gap between the draft and target models. We use these results to compute PRR.}
\begin{tabular}{lcccc}
\toprule
\multirow{3}{*}{Task (Metric)} & \multicolumn{2}{c}{Qwen3} & \multicolumn{2}{c}{Llama3} \\
\cmidrule(lr){2-3} \cmidrule(lr){4-5}
& 0.6B & 8B & 1B & 8B \\
\midrule
Jailbreak-SD (AFR$\uparrow$) & .195 & .895 & .105 & .915 \\
OpenPromptInjection (AFR$\uparrow$) & .030 & .820 & .060 & .830 \\
AgentDojo (UUA$\uparrow$) & .078 & .264 & -- & -- \\
HumanEval (Pass@1$\uparrow$) & .396 & .854 & .427 & .701 \\
GSM8K (Accuracy$\uparrow$) & .500 & .930 & .410 & .820 \\
\bottomrule
\end{tabular}
\label{tab:raw_model_performance}
\vspace{-0.15in}
\end{table}

\paragraph{Environment} 
All experiments are conducted on a server equipped with an AMD EPYC 9334 32-Core Processor running Ubuntu 22.04.5 LTS, with four NVIDIA RTX PRO 6000 Blackwell Workstation Edition GPUs.

\subsection{Security Degrades Faster than Utility}

\paragraph{Lower Security Performance Retention} 
We first investigate how lossy speculative decoding affects security and utility benchmarks across a broad range of verifier settings. Table~\ref{tab:measurement_baselines} reports the PRR and SPD of different lossy speculative decoding methods, averaged over hyperparameter sweeps that span diverse ARs. We have two observations:

\begin{icompact}
    \item \emph{Lossy speculative decoding retains lower PRR on security benchmarks than on utility benchmarks.} For Qwen3-0.6B/8B, the average PRR is 0.744 on the two security benchmarks, Jailbreak and Prompt Injection, compared with 0.905 on the two utility benchmarks, HumanEval and GSM8K. The same trend holds for Llama3-1B/8B, with average PRRs of 0.719 on security benchmarks and 0.866 on utility benchmarks.

    \item \emph{The lower PRR on security benchmarks does not come with a higher SPD.} For Qwen3-0.6B/8B, lossy speculative decoding achieves only 1.49$\times$ average SPD on security benchmarks, compared with 1.79$\times$ on utility benchmarks. For Llama3-1B/8B, the average SPD is 1.69$\times$ on security benchmarks and 1.85$\times$ on utility benchmarks. Therefore, security benchmarks suffer larger performance degradation, but do not receive higher SPD in return.
\end{icompact}

\paragraph{Earlier Security Degradation} 
We next measure how fast security and utility performance degrade as the verifier accepts more draft tokens, i.e., with a higher AR. Figure~\ref{fig:security_utility_asymmetry_qwen} shows how the PRR varies with the AR on utility and security benchmarks. Moving right along a curve corresponds to a more aggressive acceptance strategy with higher SPD, but the generated sequence may deviate further from the target model's original behavior. We have two observations:

\begin{icompact}
    \item \emph{Security PRR is lower at comparable ARs.} 
    PRR decreases as the AR increases on both utility and security benchmarks, indicating that a more aggressive relaxed verifier introduces greater behavioral deviation from the target model. However, this degradation is consistently larger for security tasks. Under the same verifier and at comparable ARs, security benchmarks retain substantially less target model performance than utility benchmarks.

    \item \emph{Security PRR drops much earlier.} 
    Security degradation also begins at lower ARs. For several methods, utility PRR on GSM8K and HumanEval remains close to the baseline until the verifier becomes highly aggressive, e.g., at ARs around 0.7. In contrast, security PRR often starts to decline at much lower ARs, e.g., around 0.3.
\end{icompact}

These observations show that speculative decoding with relaxed verifiers can weaken jailbreak or prompt injection defenses even when performance on general utility benchmarks shows only limited degradation. 

\paragraph{Practical Operating Regime}
Compared with lossless SD, lossy SD raises average SPD from $1.05\times$ to $1.49\times$ on Qwen and from $1.35\times$ to $1.69\times$ on Llama. Combined with the thresholds above, $\mathrm{AR}\in[0.3,0.7]$ identifies the regime where additional speedup is meaningful and safety/security degradation has emerged.

\begin{takeaway}
Lossy speculative decoding exhibits a security--utility asymmetry: at comparable ARs, it can preserve utility while incurring earlier and sharper degradation on security. These results suggest that efficiency--utility evaluation alone is insufficient for assessing speculative decoding in security-sensitive deployments.
\end{takeaway}
\section{\sys}

We now propose \sys, a verification correction mechanism that can be seamlessly integrated into any speculative decoding pipeline. \sys mitigates the performance degradation introduced by relaxed speculative decoding verifiers, particularly on security-critical tasks. 

\subsection{\sys Overview} 

\begin{algorithm}[t]
\caption{\sys}
\label{alg:sys}
\KwIn{Relaxed verifier $S$ with acceptance functions $\{\alpha_i\}_{i=1}^{k}$ and recovery distributions $\{R_i\}_{i=1}^{k+1}$, prefix $x_{1:t}$, speculation length $k$, verification correction schedule $\{\eta_j\}_{j\ge 1}$}
\KwOut{Tokens sampled under \sys}

\For{$i = 1$ \KwTo $k$}{\label{line:begin-draft}
    $p_i \gets p(\cdot \mid x_{1:t}, \tilde{x}_{1:i-1})$ \label{line:draft-dist}\;
    $\tilde{x}_i \sim p_i$ \label{line:draft-sample}\;
}
$\{q_i\}_{i=1}^{k+1} \gets \textsc{TargetForward}(x_{1:t}, \tilde{x}_{1:k})$ \label{line:target}\;

\For{$i = 1$ \KwTo $k$}{
    $\alpha_i^{\mathrm{sd}} \gets \min\{\mathbf{1},\, q_i/p_i\}$ \label{line:sd-alpha}\;
    $R_i^{\mathrm{sd}} \gets
    \dfrac{(q_i-p_i)_{+}}
    {\sum_{x}(q_i(x)-p_i(x))_{+}}$ \label{line:sd-recovery}\;
    $\alpha_i^{\eta} \gets
    (1-\eta_{t+i})\alpha_i^{\mathrm{sd}}
    + \eta_{t+i}\alpha_i$ \label{line:mix-alpha}\;
    $r \sim \mathcal{U}(0,1)$ \label{line:uniform}\;

    \uIf{$r \le \alpha_i^{\eta}(\tilde{x}_i)$}{
        $x_{t+i} \gets \tilde{x}_i$ \label{line:accept}\;
    }
    \Else{
        $R_i^{\eta} \gets
        (1-\eta_{t+i})R_i^{\mathrm{sd}}
        + \eta_{t+i}R_i$ \label{line:mix-recovery-reject}\;
        $x_{t+i} \sim R_i^{\eta}$ \label{line:recovery-sample-reject}\;
        \Return $x_{t+1},\ldots,x_{t+i}$ \label{line:return-reject}\;
    }
}

$R_{k+1}^{\eta} \gets
(1-\eta_{t+k+1})q_{k+1}
+ \eta_{t+k+1}R_{k+1}$ \label{line:mix-recovery-accept}\;
$x_{t+k+1} \sim R_{k+1}^{\eta}$ \label{line:extra-sample}\;
\Return $x_{t+1},\ldots,x_{t+k+1}$ \label{line:return-accept}\;
\end{algorithm}

Algorithm~\ref{alg:sys} presents \sys, a speculative decoding algorithm that incorporates verification correction.
The algorithm is formulated as a single-step procedure, and the complete decoding process can be constructed by invoking it repeatedly.
The input consists of a relaxed verifier $S$, the current prefix $x_{1:t}$, a speculation length $k$, and a verification correction schedule $\{\eta_j\}_{j\ge 1}$. The verifier $S$ exposes acceptance functions $\{\alpha_i\}_{i=1}^{k}$ and recovery distributions $\{R_i\}_{i=1}^{k+1}$. 
The key idea of \sys is to use a correction schedule to augment the relaxed verifier with a lossless reference path induced by the standard verifier. 
We provide both theoretical and empirical analysis of this design in Section~\ref{sec:analysis}.

\paragraph{Draft Generation and Scoring (Lines~\ref{line:begin-draft}--\ref{line:target})}
Given the prefix $x_{1:t}$, the draft model first generates $k$ speculative tokens autoregressively. At speculative position $i$, it forms the draft distribution
\begin{align}
    p_i = p(\cdot \mid x_{1:t}, \tilde{x}_{1:i-1}),
\end{align}
and samples $\tilde{x}_i\sim p_i$ (Lines~\ref{line:begin-draft}--\ref{line:draft-sample}).
These tokens are tentative and are not committed to the output sequence until they pass verification. The target model is then applied in a single forward pass to the concatenation of the current prefix and the speculative draft tokens, yielding target distributions $\{q_i\}_{i=1}^{k+1}$ for the $k$ speculative positions and the subsequent extra-token position (Line~\ref{line:target}). This single target forward pass preserves the parallelization benefit of speculative decoding.

\paragraph{Lossless Reference of the Standard Verifier (Lines~\ref{line:sd-alpha}--\ref{line:sd-recovery})}
For each token position, \sys constructs the lossless reference of the standard verifier~\cite{leviathan2023fast}. The acceptance rule is
\begin{align}
    \alpha_i^{\mathrm{sd}}
    =
    \min\{\mathbf{1}, q_i/p_i\},
    \label{eq:sd-alpha}
\end{align}
and the corresponding recovery distribution is
\begin{align}
    R_i^{\mathrm{sd}}
    =
    \frac{(q_i-p_i)_{+}}
    {\sum_x (q_i(x)-p_i(x))_{+}}.
    \label{eq:sd-recovery}
\end{align}
Both of these ensure that speculative decoding remains distributionally lossless.

\paragraph{Verification Correction (Lines~\ref{line:mix-alpha}--\ref{line:return-accept})}
\sys interpolates the standard and relaxed verifiers at each token position. The mixed acceptance probability is defined as
\begin{align}
    \alpha_i^{\eta}
    =
    (1-\eta_{t+i})\alpha_i^{\mathrm{sd}}
    +
    \eta_{t+i}\alpha_i.
    \label{eq:mixed-alpha}
\end{align}
The proposed draft token $\tilde{x}_i$ is accepted if a uniform random variable $r\sim\mathcal{U}(0,1)$ is no larger than $\alpha_i^{\eta}(\tilde{x}_i)$ (Lines~\ref{line:uniform}--\ref{line:accept}). Upon acceptance, the algorithm sets $x_{t+i}\gets \tilde{x}_i$. If the token is rejected, verification stops immediately. The algorithm then constructs a mixed recovery distribution at the rejected position:
\begin{align}
    R_i^{\eta} = (1-\eta_{t+i})R_i^{\mathrm{sd}} + \eta_{t+i}R_i,
    \label{eq:mixed-recovery-reject}
\end{align}
samples $x_{t+i}\sim R_i^{\eta}$, and returns the accepted prefix together with this recovery token, as shown in Lines~\ref{line:mix-recovery-reject}--\ref{line:return-reject}. This left-to-right stopping rule preserves the causal structure of autoregressive decoding, since all draft tokens after a rejection depend on a token that has not been accepted.

If all $k$ draft tokens are accepted, \sys constructs the sampling distribution for the next position by interpolating between the target distribution $q_{k+1}$ and the relaxed verifier's recovery distribution $R_{k+1}$:
\begin{align}
R_{k+1}^{\eta} = (1-\eta_{t+k+1})q_{k+1} + \eta_{t+k+1}R_{k+1}.
\label{eq:mixed-recovery-accept}
\end{align}
It then samples the extra token $x_{t+k+1}\sim R_{k+1}^{\eta}$ and returns $x_{t+1},\ldots,x_{t+k+1}$ (Lines~\ref{line:mix-recovery-accept}--\ref{line:return-accept}). Therefore, each invocation of \sys advances generation by at least one token and by at most $k+1$ tokens.

\subsection{Verification Correction Schedule $\{\eta_j\}_{j\ge 1}$}
\label{sec:vsc}
The verification correction schedule $\{\eta_j\}_{j\ge 1}$ in Algorithm~\ref{alg:sys} determines how \sys interpolates between the standard verifier and the relaxed verifier. The index $j$ denotes the \emph{global} decoding position. 
When $\eta_j=0$, the verification procedure reduces to standard speculative decoding. When $\eta_j=1$, \sys recovers the relaxed verifier $S$ and directly applies its acceptance and recovery rules. For $0<\eta_j<1$, \sys interpolates between the two verification rules, allowing it to balance distributional fidelity and decoding efficiency at each token position.

Thus, $\{\eta_j\}_{j\ge 1}$ determines the trajectory of the correction level. We define three correction schedules $\{\eta_j\}_{j\ge 1}$, motivated by the empirical analysis in Section~\ref{subsec:empirical_analysis}.

\medskip\noindent\emph{(i) Step schedule.}
The step schedule keeps the correction level at its initial value before position $L$ and switches directly to the final value at position $L$:
\begin{align}
\eta_{\mathrm{step}}(j) = \1[j> L] = 
    \begin{cases}
        0, & j\le L,\\
        1, & j> L.
    \end{cases}
\end{align}
This schedule makes an abrupt transition between the two boundary correction levels at the switch position $L$.

\medskip\noindent\emph{(ii) Linear schedule.}
The linear schedule interpolates between the two boundary correction levels at a constant rate until the terminal value is reached:
\begin{align}
\eta_{\mathrm{linear}}(j)=\mathrm{clip}\left(\frac{j}{L+1},0,1\right).
\end{align}
This schedule changes $\eta_j$ linearly, with larger $L$ producing a slower transition and smaller $L$ producing a faster one.

\medskip\noindent\textit{(iii) Power-law schedule.}
The power-law schedule generalizes the linear transition with a curvature parameter $\gamma>0$:
\begin{align}
\eta_{\gamma}(j) = \left[\mathrm{clip}\left(\frac{j}{L+1},0,1\right)\right]^{\gamma}.
\end{align}
For $\gamma=1$, the power-law schedule reduces to the linear schedule. Values $\gamma>1$ delay the transition toward later positions, while values $0<\gamma<1$ make it occur earlier.

\begin{takeaway}
\sys uses the verification correction schedule to control how strongly each decoding position follows the standard verifier or the relaxed verifier. This behavior is determined by the correction schedule $\{\eta_j\}_{j\ge 1}$, which allows \sys to balance utility and efficiency without changing the verifier itself.
\end{takeaway}
\section{Why \sys Works: Analysis}
\label{sec:analysis}

This section provides theoretical and empirical analyses of how \sys mitigates performance degradation under lossy speculative decoding. The theoretical analysis characterizes the task-agnostic performance-retention benefit of verifier correction, while the empirical analysis identifies the greater early-position sensitivity of security metrics. This distinction explains why \sys applies stronger correction to early tokens on security tasks.

\subsection{Theoretical Analysis of \sys}
\label{sec:theo}

\paragraph{Output Distribution}
A speculative decoding step at position $t$ proceeds as follows. Conditioned on the prefix $x_{<t}$, the draft model first proposes a token $\tilde{x}_t \sim p_t$. The proposed token is then accepted with probability $\alpha_t(\tilde{x}_t)$. If it is accepted, the system sets $x_t = \tilde{x}_t$; otherwise, the system samples a replacement token from the recovery distribution $x_t \sim R_t$. Equivalently, the one-step transition of the system can be written as
\begin{align}
x_t =
\begin{cases}
\tilde{x}_t, & \text{with probability } \alpha_t(\tilde{x}_t), \\
\hat{x}_t, \ \hat{x}_t \sim R_t, & \text{with probability } 1-\alpha_t(\tilde{x}_t).
\end{cases}
\end{align}
Through this definition, we can derive the output distribution of speculative decoding with the relaxed verifier.

\begin{theorem}[Output Distribution of Relaxed Verifiers]
\label{thm:one_step_output_distribution}
Fix a prefix $x_{<t}$ and consider a relaxed speculative decoding verifier $S(q,p,\alpha,R)$. Let $\tilde{q}_t$ denote the output distribution induced by one decoding step of $S$, i.e.,
\begin{align}
\tilde{q}_t(x) \triangleq \Pr_S(x_t = x \mid x_{<t}).
\end{align}
Then, for every token $x \in \mathcal{V}$, $\tilde{q}_t(x)$ is given by
\begin{align}
\tilde{q}_t(x) = p_t(x)\alpha_t(x) + \left(1-\mathbb{E}_{u\sim p_t}[\alpha_t(u)]\right)R_t(x).
\label{eq:induced_output_distribution}
\end{align}
The proof is provided in Appendix~\ref{app:proof_one_step_output_distribution}.
\end{theorem}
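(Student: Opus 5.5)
The plan is a direct computation via the law of total probability over the two events of the one-step transition: acceptance of the draft token, or rejection followed by recovery. Fix the prefix $x_{<t}$ throughout, so $p_t$, $\alpha_t$, and $R_t$ are fixed objects. We model the step with three sources of randomness: the draft token $\tilde{x}_t\sim p_t$, a uniform $r\sim\mathcal{U}(0,1)$ independent of $\tilde{x}_t$ that implements acceptance via $r\le\alpha_t(\tilde{x}_t)$, and a recovery token $\hat{x}_t\sim R_t$. The key modeling assumption is that, given the prefix, $\hat{x}_t$ is drawn independently of $(\tilde{x}_t,r)$. For the methods in Table~\ref{tab:relaxed_verifiers}, $R_t$ may depend on the rejected draft token (e.g., the residual $\rho_t(q_t;x)$ is defined via $p_t$, not via $\tilde{x}_t$, so it is fine). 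We will state explicitly that $R_t$ is a fixed distribution given $x_{<t}$, as in Definition~\ref{def:relaxed-sd-verifier}.

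We then decompose
\[
\Pr(x_t=x)=\Pr(x_t=x,\ \text{accept})+\Pr(x_t=x,\ \text{reject}).
\]
For the acceptance term, the event $\{x_t=x,\text{accept}\}$ equals $\{\tilde{x}_t=x,\ r\le\alpha_t(x)\}$. Conditioning on $\tilde{x}_t=x$ gives $p_t(x)\alpha_t(x)$. For the rejection term, $x_t=\hat{x}_t$, and by independence this factors as $\Pr(\text{reject})\,R_t(x)$. Here
\[
\Pr(\text{reject})=\sum_u p_t(u)\bigl(1-\alpha_t(u)\bigr)=1-\mathbb{E}_{u\sim p_t}[\alpha_t(u)].
\]
Summing the two terms yields \eqref{eq:induced_output_distribution}.

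The calculation itself is routine. The only real obstacle is justifying the factorization in the rejection term: we need $\hat{x}_t$ to be conditionally independent of which draft token was rejected. If some verifier's recovery depended on the rejected token, the formula would instead involve $\sum_u p_t(u)(1-\alpha_t(u))R_t(x\mid u)$. We would therefore spell out this independence as part of the verifier abstraction.

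As a final sanity check, we verify that $\tilde{q}_t$ sums to one: the acceptance terms contribute $\mathbb{E}_{p_t}[\alpha_t]$, and the rejection terms contribute $1-\mathbb{E}_{p_t}[\alpha_t]$. We also check that the standard choices $\alpha_t=\min\{1,q_t/p_t\}$ and $R_t=(q_t-p_t)_+/\sum_v(q_t(v)-p_t(v))_+$ recover $\tilde{q}_t=q_t$. This holds because $p_t\min\{1,q_t/p_t\}=\min\{p_t,q_t\}$, and $1-\sum\min\{p_t,q_t\}=\sum(q_t-p_t)_+$.
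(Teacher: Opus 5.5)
Your proposal is correct and follows essentially the same route as the paper: both condition on the acceptance event $A_t$, obtain $p_t(x)\alpha_t(x)$ from the accepted branch, and obtain $\bigl(1-\mathbb{E}_{u\sim p_t}[\alpha_t(u)]\bigr)R_t(x)$ from the rejected branch. Your explicit requirement that the recovery draw be conditionally independent of which draft token was rejected is exactly the assumption the paper makes implicitly when it writes $\Pr_S(x_t=x\mid A_t^c,x_{<t})=R_t(x)$.
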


\begin{corollary}[Output Distribution of Standard SD]
\label{cor:lossless_sd}
For a fixed prefix $x_{<t}$, consider the standard SD verifier with acceptance function and recovery distribution defined as
\begin{align}
\alpha_t^{\mathrm{sd}}(x) &= \min\left(1,\frac{q_t(x)}{p_t(x)}\right), \\
R_t^{\mathrm{sd}}(x) &= \frac{(q_t(x)-p_t(x))_+}{\sum_{u\in\mathcal{V}}(q_t(u)-p_t(u))_+},
\end{align}
where $(f)_+ = \max(0,f)$. Then the one-step output distribution induced by
the verifier matches the target distribution:
\begin{align}
\tilde{q}_t^{\mathrm{sd}}(x) = q_t(x),
\qquad \forall x\in\mathcal{V}.
\end{align}
The proof is provided in Appendix~\ref{app:proof_lossless_sd}.
\end{corollary}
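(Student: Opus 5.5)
We prove Corollary~\ref{cor:lossless_sd} by substituting $\alpha_t^{\mathrm{sd}}$ and $R_t^{\mathrm{sd}}$ into Theorem~\ref{thm:one_step_output_distribution} and simplifying. We split Eq.~\eqref{eq:induced_output_distribution} into the acceptance term $p_t(x)\alpha_t^{\mathrm{sd}}(x)$ and the recovery term $\bigl(1-\mathbb{E}_{u\sim p_t}[\alpha_t^{\mathrm{sd}}(u)]\bigr)R_t^{\mathrm{sd}}(x)$. We then show that the first equals $\min(p_t(x),q_t(x))$ and the second equals $(q_t(x)-p_t(x))_+$. The result follows from the pointwise identity $\min(a,b)+(b-a)_+=b$.

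\textbf{Acceptance term.} For $p_t(x)>0$ we have $p_t(x)\min\bigl(1,q_t(x)/p_t(x)\bigr)=\min\bigl(p_t(x),q_t(x)\bigr)$. Tokens with $p_t(x)=0$ are never proposed, so this term is $0$ for them, which again equals $\min(p_t(x),q_t(x))$. This uses the convention that $\alpha_t$ is only evaluated on the support of $p_t$.

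\textbf{Rejection mass.} We compute
\begin{align}
1-\mathbb{E}_{u\sim p_t}[\alpha_t^{\mathrm{sd}}(u)] = 1-\sum_{u}\min\bigl(p_t(u),q_t(u)\bigr) = \sum_{u}\bigl(q_t(u)-p_t(u)\bigr)_+,
\end{align}
where the last step uses $\sum_u q_t(u)=1$ and $q_t(u)-\min(p_t(u),q_t(u))=(q_t(u)-p_t(u))_+$. Denote this quantity by $Z$. It is exactly the normalizer of $R_t^{\mathrm{sd}}$. So when $Z>0$ the recovery term is
\begin{align}
Z\cdot\frac{(q_t(x)-p_t(x))_+}{Z}=(q_t(x)-p_t(x))_+.
\end{align}
Adding the two terms gives $\min(p_t(x),q_t(x))+(q_t(x)-p_t(x))_+=q_t(x)$ for every $x\in\mathcal{V}$.

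\textbf{Main obstacle: the degenerate case $Z=0$.} Here $R_t^{\mathrm{sd}}$ is $0/0$, so this case must be handled separately. If $Z=0$, then $q_t\le p_t$ pointwise, and since both sum to $1$ this forces $p_t=q_t$. The rejection probability is then $0$, so the recovery term vanishes for any choice of $R_t^{\mathrm{sd}}$. The acceptance term alone equals $\min(p_t,q_t)=q_t$, and the identity still holds. Apart from this case and the support caveat above, the argument is routine algebra.
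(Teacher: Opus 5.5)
Your proposal is correct and follows essentially the same route as the paper: substitute $\alpha_t^{\mathrm{sd}}$ and $R_t^{\mathrm{sd}}$ into Theorem~\ref{thm:one_step_output_distribution}, identify the acceptance term as $\min(p_t(x),q_t(x))$ and the rejection probability as the normalizer $\sum_u (q_t(u)-p_t(u))_+$, then add the two terms. The differences are minor: you get the normalizer from $\sum_u q_t(u)=1$ rather than from the paper's equal-mass argument for the positive and negative parts, and you handle the $Z=0$ and $p_t(x)=0$ edge cases explicitly, which the paper's proof leaves implicit.
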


Corollary~\ref{cor:lossless_sd} establishes that the standard verifier is a zero-deviation baseline for the subsequent analysis.

\paragraph{Distributional Deviation}

We then measure the one-step distortion of a relaxed verifier by the total variation distance $\mathrm{TV}(\tilde{q}_t, q_t)$ between its output $\tilde{q}_t$ and the target $q_t$. The following theorem decomposes $\mathrm{TV}(\tilde{q}_t, q_t)$ into two sources corresponding to the two design choices of the verifier.

\begin{theorem}[Decomposition of Distributional Deviation]
\label{thm:distributional_deviation_decomposition}
For any relaxed verifier $S(q,p,\alpha,R)$ at draft position $t$, define the pointwise acceptance deviation $\delta_t(x) \triangleq \alpha_t(x)-\alpha_t^{\mathrm{sd}}(x)$. It follows that 
\begin{align}
\mathrm{TV}(\tilde{q}_t, q_t) &\le
\underbrace{\mathbb{E}_{x\sim p_t}\left[\left|\delta_t(x)\right|\right]
}_{\text{acceptance mismatch}}\notag\\
&\quad+\underbrace{\left(1-\mathbb{E}_{x\sim p_t}[\alpha_t(x)]\right)\mathrm{TV}(R_t,R_t^{\mathrm{sd}})}_{\text{recovery mismatch}}.
\label{eq:distributional_deviation_decomposition}
\end{align}
The proof is provided in Appendix~\ref{app:proof_distributional_deviation_decomposition}.
\end{theorem}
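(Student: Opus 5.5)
Writing $\beta_t \triangleq \mathbb{E}_{x\sim p_t}[\alpha_t(x)]$ and $\beta_t^{\mathrm{sd}} \triangleq \mathbb{E}_{x\sim p_t}[\alpha_t^{\mathrm{sd}}(x)]$, the plan is to express both $\tilde{q}_t$ and $q_t$ in the common form of Theorem~\ref{thm:one_step_output_distribution} and subtract them. By Theorem~\ref{thm:one_step_output_distribution}, $\tilde{q}_t(x)=p_t(x)\alpha_t(x)+(1-\beta_t)R_t(x)$. By Corollary~\ref{cor:lossless_sd}, $q_t(x)=\tilde{q}_t^{\mathrm{sd}}(x)=p_t(x)\alpha_t^{\mathrm{sd}}(x)+(1-\beta_t^{\mathrm{sd}})R_t^{\mathrm{sd}}(x)$.

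Subtracting and adding and removing $(1-\beta_t)R_t^{\mathrm{sd}}(x)$ gives, for every $x$,
\begin{align}
\tilde{q}_t(x)-q_t(x) &= p_t(x)\delta_t(x) + (1-\beta_t)\bigl(R_t(x)-R_t^{\mathrm{sd}}(x)\bigr) \notag\\
&\quad - (\beta_t-\beta_t^{\mathrm{sd}})R_t^{\mathrm{sd}}(x),
\end{align}
where $\beta_t-\beta_t^{\mathrm{sd}}=\mathbb{E}_{p_t}[\delta_t]$. Taking $\mathrm{TV}=\frac12\sum_x|\cdot|$ and applying the triangle inequality splits the bound into three terms. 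The first is $\frac12\mathbb{E}_{p_t}|\delta_t|$. The second is exactly $(1-\beta_t)\mathrm{TV}(R_t,R_t^{\mathrm{sd}})$, which is the recovery-mismatch term. The third is $\frac12|\mathbb{E}_{p_t}\delta_t|\sum_x R_t^{\mathrm{sd}}(x)=\frac12|\mathbb{E}_{p_t}\delta_t|\le \frac12\mathbb{E}_{p_t}|\delta_t|$. The first and third terms together give $\mathbb{E}_{p_t}|\delta_t|$, which is the acceptance-mismatch term, and the claimed inequality follows.

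The only delicate points are bookkeeping. First, the add-and-subtract must use the coefficient $(1-\beta_t)$ and not $(1-\beta_t^{\mathrm{sd}})$, so that the recovery term carries the relaxed rejection mass as stated. Second, $R_t^{\mathrm{sd}}$ is undefined in the degenerate case $q_t=p_t$, where $\beta_t^{\mathrm{sd}}=1$. I would handle this by fixing $R_t^{\mathrm{sd}}$ to be an arbitrary distribution there. Corollary~\ref{cor:lossless_sd} still holds because its coefficient $1-\beta_t^{\mathrm{sd}}$ vanishes, and the bound holds for any such choice. I expect the main obstacle to be only confirming that the normalization $\sum_x R_t^{\mathrm{sd}}(x)=1$ absorbs the third term into the factor $1/2$. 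No tighter analysis is needed.
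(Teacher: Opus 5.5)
Your proof is correct and is essentially the paper's argument: you add and subtract $(1-\bar\alpha_t)R_t^{\mathrm{sd}}$ and apply the triangle inequality. The paper splits in two stages, first acceptance versus recovery and then inside the acceptance term, where you do a single three-way split; both then use $\sum_x R_t^{\mathrm{sd}}(x)=1$ and $|\mathbb{E}_{p_t}[\delta_t]|\le\mathbb{E}_{p_t}[|\delta_t|]$, and your remark about fixing $R_t^{\mathrm{sd}}$ arbitrarily when $q_t=p_t$ is a small piece of care the paper omits.
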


\begin{takeaway}
Theorem~\ref{thm:distributional_deviation_decomposition} attributes the distributional deviation to two factors: (1) the acceptance bias introduced by the relaxed verifier, and (2) the discrepancy between the recovery distribution and the standard recovery distribution. 
This observation suggests that \sys requires a coordinated design for correcting \emph{both} $\alpha$ and $R$.
\end{takeaway}

\paragraph{Performance Degradation Decomposition}

We connect the distributional deviation $\mathrm{TV}(\tilde{q}_t, q_t)$ to downstream task metrics. Let
$S_m:\mathcal{V}^{T}\rightarrow[0,1]$ be the scoring function for the specific task $m$, and let
\begin{align}
M = S_m(x_{1:T}) \in [0,1]
\end{align}
denote the task score. Write $\tilde{Q}$ (resp.\ $Q$) for the full-sequence distribution induced by the relaxed verifier (resp.\ standard verifier). For a fixed prefix $x_{<t}$, define 
\begin{align}
\mu_t(x;x_{<t}) \triangleq \mathbb{E}_{Q} \left[M \mid x_{<t},x_t=x\right],
\end{align}
as the expected task score when token $x$ is chosen at position $t$ and the remaining tokens are drawn from $q$.

\begin{theorem}[Task Metric Drift Bound]
\label{thm:bd_bound}
For any task metric $M\in[0,1]$, let $\tilde{Q}_{<t}$ be the marginal of $\tilde{Q}$ over $x_{<t}$, write $\mathrm{osc}(f)\triangleq\max_x f(x)-\min_x f(x)$, and write $\mathrm{TV}_t(x_{<t})$ for the distributional deviation $\mathrm{TV}(\tilde{q}_t,q_t)$ of Theorem~\ref{thm:distributional_deviation_decomposition}, viewed as a function of the prefix $x_{<t}$. Define
\begin{align}
\Gamma \triangleq \sum_{t=1}^{T}\mathbb{E}_{x_{<t}\sim\tilde{Q}_{<t}}\bigl[
\mathrm{TV}_t(x_{<t})\cdot\mathrm{osc}\left(\mu_t(\cdot;x_{<t})\right)\bigr].
\label{eq:Gamma_def}
\end{align}
With $\Gamma$ as defined in Equation~\eqref{eq:Gamma_def}, the metric gap satisfies
\begin{align}
\left|\mathbb{E}_{\tilde{Q}}[M]-\mathbb{E}_{Q}[M]\right|\le\min\{\Gamma,1\}.
\label{eq:bd_bound_prefix}
\end{align}
Letting $\mathrm{TV}_t\triangleq\sup_{x_{<t}}\mathrm{TV}_t(x_{<t})$ denote the worst-case distributional deviation and $\mathrm{osc}(\mu_t)\triangleq\sup_{x_{<t}}\mathrm{osc}(\mu_t(\cdot;x_{<t}))$ denote the worst-case task-score oscillation, we have
\begin{align}
\left|\mathbb{E}_{\tilde{Q}}[M]-\mathbb{E}_{Q}[M]\right|
\le\min\left\{\sum_{t=1}^{T}\mathrm{TV}_t\cdot\mathrm{osc}(\mu_t),1\right\}.
\label{eq:bd_bound_uniform}
\end{align}
The proof is provided in Appendix~\ref{app:proof_bd_bound}.
\end{theorem}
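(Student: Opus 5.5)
The plan is a hybrid (telescoping) argument over positions. For $t=0,1,\dots,T$ define the hybrid distribution $H_t$ on $\mathcal{V}^T$ that draws $x_{1:t}$ from the relaxed verifier $\tilde{Q}$ and the remaining tokens $x_{t+1:T}$ from the target model $q$, i.e.\ from $Q$ conditioned on $x_{1:t}$. Then $H_0=Q$ and $H_T=\tilde{Q}$, so
\begin{align}
\mathbb{E}_{\tilde{Q}}[M]-\mathbb{E}_{Q}[M]=\sum_{t=1}^{T}\bigl(\mathbb{E}_{H_t}[M]-\mathbb{E}_{H_{t-1}}[M]\bigr).
\end{align}
Here we use Corollary~\ref{cor:lossless_sd}: the standard verifier reproduces $q_t$ at every step, so drawing the suffix from $Q$ is the same as drawing it autoregressively from $q$. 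This is exactly what makes $\mu_t$ the right continuation value.

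Next, analyse a single difference term. The hybrids $H_t$ and $H_{t-1}$ share the prefix law $\tilde{Q}_{<t}$ and the suffix law given $x_{1:t}$. They differ only in the law of $x_t$ given $x_{<t}$, which is $\tilde{q}_t$ under $H_t$ and $q_t$ under $H_{t-1}$. By the tower property,
\begin{align}
\mathbb{E}_{H_t}[M]-\mathbb{E}_{H_{t-1}}[M]=\mathbb{E}_{x_{<t}\sim\tilde{Q}_{<t}}\Bigl[\sum_{x}\bigl(\tilde{q}_t(x)-q_t(x)\bigr)\mu_t(x;x_{<t})\Bigr].
\end{align}
For the inner sum, use the standard fact that for two probability vectors the signed difference has total mass zero. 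We may therefore subtract any constant $c$ from $\mu_t$, for example the midpoint of its range. Then $|\sum_x(\tilde{q}_t-q_t)(x)\mu_t(x)|\le \|\tilde{q}_t-q_t\|_1\cdot\tfrac12\mathrm{osc}(\mu_t)=\mathrm{TV}(\tilde{q}_t,q_t)\,\mathrm{osc}(\mu_t(\cdot;x_{<t}))$. Applying the triangle inequality over $t$ then gives the bound by $\Gamma$.

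The cap at $1$ is immediate, since $M\in[0,1]$ forces both expectations into $[0,1]$. This establishes Equation~\eqref{eq:bd_bound_prefix}. For Equation~\eqref{eq:bd_bound_uniform}, bound each factor inside the expectation by its supremum over prefixes, and note that the expectation of a constant is the constant.

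The main obstacle is bookkeeping rather than analysis. One must check that the hybrid decomposition is legitimate with the exact conditioning in the definition of $\mu_t$, namely continuation under $Q$ with an arbitrary token at $t$. One must also check that the relaxed verifier's output at position $t$, given $x_{<t}$, is well defined as $\tilde{q}_t$ from Theorem~\ref{thm:one_step_output_distribution}. This holds because the block structure of speculation, with rounds of length $k$, still induces a per-position conditional given the committed prefix. I would state explicitly that $\tilde{q}_t$ denotes this conditional.
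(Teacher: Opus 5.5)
Your proposal is correct and follows essentially the same route as the paper's proof: the same hybrid chain from $Q$ to $\tilde{Q}$, the same per-position term $\mathbb{E}_{x_{<t}\sim\tilde{Q}_{<t}}\bigl[\sum_x(\tilde{q}_t-q_t)(x)\,\mu_t(x;x_{<t})\bigr]$, the TV--oscillation inequality, and a pointwise supremum over prefixes for the uniform bound. Your additions are useful clarifications rather than a different argument: the centering proof of that inequality, the explicit justification of the cap at $1$ (which the paper's appendix omits), and the remark that $\tilde{q}_t$ must be read as the true conditional of $\tilde{Q}$ given $x_{<t}$ (which the paper's hybrid construction uses only implicitly).
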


Theorem~\ref{thm:bd_bound} links performance degradation to the distributional deviation of the relaxed verifier. Notably, the bound depends not only on the deviation between $\tilde{q}_t$ and $q_t$, but also on the task sensitivity to the token chosen at position $t$. 
This sensitivity is captured by $\mathrm{osc}(\mu_t(\cdot;x_{<t}))$, which measures the largest possible change in the final metric induced by a one-token decision. We formalize it as follows.

\begin{definition}[Positional Task Sensitivity]
\label{def:task_criticality}
The positional task sensitivity at position $t$ given prefix $x_{<t}$ is
\begin{align}
I_t(x_{<t}) &\triangleq \mathrm{osc}\left(\mu_t(\cdot;x_{<t})\right) \notag\\
&= \max_{x\in\mathcal{V}}\mu_t(x;x_{<t})
- \min_{x\in\mathcal{V}}\mu_t(x;x_{<t}).
\label{eq:task_criticality_def}
\end{align}
\end{definition}

\begin{takeaway}
Theorem~\ref{thm:bd_bound} shows that performance degradation introduced by the relaxed verifier can be decomposed into: (1) the distributional deviation $\mathrm{TV}(\tilde{q}_t,q_t)$, and (2) the task's sensitivity $I_t$ to token choices at each position.
\end{takeaway}

\paragraph{Distributional Deviation of \sys}

We now analyze how \sys reduces the distributional deviation.
Recall from Algorithm~\ref{alg:sys} that at position $t$, \sys applies the mixed acceptance function and recovery distribution
\begin{align}
\alpha_t^{\eta} &= (1-\eta_t)\alpha_t^{\mathrm{sd}}+\eta_t\alpha_t,
\label{eq:mixed_alpha_analysis}\\
R_t^{\eta} &= (1-\eta_t)R_t^{\mathrm{sd}}+\eta_t R_t.
\label{eq:mixed_recovery_analysis}
\end{align}
Applying Theorem~\ref{thm:one_step_output_distribution} to the mixed verifier $S(q,p,\alpha^{\eta},R^{\eta})$ gives the output distribution:
\begin{align}
\tilde{q}_t^{\eta}(x) = p_t(x)\alpha_t^{\eta}(x) + \left(1-\bar\alpha_t^{\eta}\right)R_t^{\eta}(x),
\label{eq:eta_mixed_output}
\end{align}
where $\bar\alpha_t^{\eta}\triangleq\mathbb{E}_{u\sim p_t}[\alpha_t^{\eta}(u)]=(1-\eta_t)\bar\alpha_t^{\mathrm{sd}}+\eta_t\bar\alpha_t$. 

\begin{proposition}[Degradation Mitigation via \sys]
\label{prop:eta_mixing_deviation}
Under $\eta_t\in[0,1]$ at position $t$, the distributional deviation of the \sys verifier satisfies
\begin{align}
\mathrm{TV}(\tilde{q}_t^{\eta},q_t) \leq &\eta_t\mathrm{TV}(\tilde{q}_t,q_t) \notag\\
+& \eta_t(1-\eta_t)|\bar{\alpha}_t-\bar{\alpha}_t^{\mathrm{sd}}|\mathrm{TV}(R_t,R_t^{\mathrm{sd}}).
\label{eq:eta_mixing_bound}
\end{align}
Let $\epsilon_t = \eta_t(1-\eta_t)|\bar{\alpha}_t-\bar{\alpha}_t^{\mathrm{sd}}|\mathrm{TV}(R_t,R_t^{\mathrm{sd}})$. The task metric drift under \sys satisfies
\begin{align}
\left|\mathbb{E}_{\tilde{Q}^{\eta}}[M]\!-\!\mathbb{E}_{Q}[M]\right|\!\le\!\min\left\{\sum_{t=1}^{T}(\eta_t\mathrm{TV}_t+\epsilon_t)I_t,1\right\},
\label{eq:eta_bd_bound}
\end{align}
where $\mathrm{TV}_t$, $\epsilon_t$, and $I_t$ are uniform upper bounds over prefixes. The proof is provided in Appendix~\ref{app:proof_eta_mixing_deviation}.
\end{proposition}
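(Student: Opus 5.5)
The plan is to derive an exact algebraic identity for $\tilde{q}_t^{\eta}-q_t$ and then apply the triangle inequality. Everything follows from Theorem~\ref{thm:one_step_output_distribution}, Corollary~\ref{cor:lossless_sd} and Theorem~\ref{thm:bd_bound}; no new probabilistic argument is needed.

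\textbf{Step 1 (exact decomposition).} I start from Equation~\eqref{eq:eta_mixed_output} and substitute $\alpha_t^{\eta}=(1-\eta_t)\alpha_t^{\mathrm{sd}}+\eta_t\alpha_t$. I also use
\begin{align}
1-\bar\alpha_t^{\eta}=(1-\eta_t)(1-\bar\alpha_t^{\mathrm{sd}})+\eta_t(1-\bar\alpha_t)
\end{align}
and $R_t^{\eta}=(1-\eta_t)R_t^{\mathrm{sd}}+\eta_t R_t$. Expanding the product $(1-\bar\alpha_t^{\eta})R_t^{\eta}$ gives four terms. The two ``diagonal'' terms carry coefficients $(1-\eta_t)^2$ and $\eta_t^2$; the two cross terms carry $\eta_t(1-\eta_t)$.

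I compare this with the convex combination $(1-\eta_t)\tilde{q}_t^{\mathrm{sd}}+\eta_t\tilde{q}_t$, where both one-step outputs are written via Theorem~\ref{thm:one_step_output_distribution}. After collecting terms, I expect the residual to be exactly
\begin{align}
\tilde{q}_t^{\eta}-\bigl[(1-\eta_t)\tilde{q}_t^{\mathrm{sd}}+\eta_t\tilde{q}_t\bigr]
=\eta_t(1-\eta_t)(\bar\alpha_t-\bar\alpha_t^{\mathrm{sd}})(R_t-R_t^{\mathrm{sd}}).
\end{align}
This uses $\eta_t(1-\eta_t)=(1-\eta_t)-(1-\eta_t)^2=\eta_t-\eta_t^2$ on the diagonal terms. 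By Corollary~\ref{cor:lossless_sd}, $\tilde{q}_t^{\mathrm{sd}}=q_t$, hence
\begin{align}
\tilde{q}_t^{\eta}-q_t=\eta_t(\tilde{q}_t-q_t)+\eta_t(1-\eta_t)(\bar\alpha_t-\bar\alpha_t^{\mathrm{sd}})(R_t-R_t^{\mathrm{sd}}).
\end{align}

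\textbf{Step 2 (TV bound).} I take half the $\ell_1$ norm of both sides and apply the triangle inequality. Since $\eta_t\ge 0$ and $\eta_t(1-\eta_t)\ge0$, the scalars pull out, giving Equation~\eqref{eq:eta_mixing_bound} directly.

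\textbf{Step 3 (metric drift).} The mixed rule $S(q,p,\alpha^{\eta},R^{\eta})$ is itself a valid verifier: $\alpha^\eta\in[0,1]$ and $R^\eta$ is a distribution, both being convex combinations. So Theorem~\ref{thm:bd_bound}, in its uniform form~\eqref{eq:bd_bound_uniform}, applies with $\mathrm{TV}_t$ replaced by $\sup_{x_{<t}}\mathrm{TV}(\tilde{q}_t^{\eta},q_t)$. The sensitivity $\mu_t$ is defined through $Q$ (the standard verifier) and does not depend on the verifier being analyzed, so $\mathrm{osc}(\mu_t)=I_t$ is unchanged.

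Taking suprema over prefixes in Step 2 gives $\sup\mathrm{TV}(\tilde{q}_t^{\eta},q_t)\le\eta_t\mathrm{TV}_t+\epsilon_t$. Here $\epsilon_t$ is interpreted as the uniform upper bound over prefixes, and I use $\sup(a+b)\le\sup a+\sup b$. Substituting this into Equation~\eqref{eq:bd_bound_uniform} yields Equation~\eqref{eq:eta_bd_bound}.

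The main obstacle is the bookkeeping in Step 1: the cross terms must collapse to the clean product $(\bar\alpha_t-\bar\alpha_t^{\mathrm{sd}})(R_t-R_t^{\mathrm{sd}})$. The rest is mechanical, apart from one caveat. Because $\eta_t$ depends only on the global position, it can be treated as deterministic given $t$, which justifies pulling it out of the supremum.
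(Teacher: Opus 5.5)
Your proposal is correct and follows essentially the same route as the paper: the exact identity $\tilde{q}_t^{\eta}=(1-\eta_t)q_t+\eta_t\tilde{q}_t+\eta_t(1-\eta_t)(\bar\alpha_t-\bar\alpha_t^{\mathrm{sd}})(R_t-R_t^{\mathrm{sd}})$, then the triangle inequality for the TV bound, then substitution of prefix-uniform bounds into Theorem~\ref{thm:bd_bound}. You also make explicit several points the paper's proof leaves implicit: the four-term expansion, that the mixed rule is a valid verifier, that $\mu_t$ does not depend on the verifier, and the subadditivity of the supremum.
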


Proposition~\ref{prop:eta_mixing_deviation} demonstrates that the performance degradation introduced by a relaxed verifier can be mitigated by reducing $\eta_t \mathrm{TV}_t + \epsilon_t$. Importantly, we find that $\epsilon_t$ is typically zero, and even when non-zero, it is negligible compared to the dominant term $\eta_t \mathrm{TV}_t$. For instance, under the STEP schedule, $\epsilon_t = 0$ because $\eta_t \in \{0, 1\}$; similarly, in LossySD, $\epsilon_t = 0$ because $R_t = R_t^{\mathrm{sd}}$. For cases where $\epsilon_t \neq 0$, we evaluate its magnitude relative to $\eta_t \mathrm{TV}_t$ in Table~\ref{tab:epsilon_magnitude}. The ratio of $\epsilon_t$ to $\eta_t \mathrm{TV}_t$ ranges from $0\%$ to $2.55\%$, averaging only $0.52\%$. This indicates that the bound is dominated by $\eta_t \mathrm{TV}_t$ and thus scales approximately linearly with $\eta_t$. We observe a consistent empirical trend in Figure~\ref{fig:eta_mitigation}.

\begin{table}[t]
    \centering
    \caption{
    \textbf{Magnitude of $\epsilon_t$ relative to $\eta_t\mathrm{TV}_t$.} Across four datasets, $\epsilon_t$ is negligible compared to $\eta_t\mathrm{TV}_t$.
    }
        \setlength{\tabcolsep}{14pt}
    \label{tab:epsilon_magnitude}
    \begin{tabular}{lccc}
        \toprule
        Method & $\eta=0.25$ & $\eta=0.5$ & $\eta=0.75$ \\
        \midrule
        LossySD      & 0.00\% & 0.00\% & 0.00\% \\
        FSD          & 1.03\% & 0.67\% & 0.33\% \\
        MARS         & 0.48\% & 0.32\% & 0.16\% \\
        BiLD         & 0.62\% & 0.41\% & 0.21\% \\
        FLy          & 2.55\% & 1.70\% & 0.85\% \\
        SpecCascade  & 0.00\% & 0.00\% & 0.00\% \\
        \bottomrule
    \end{tabular}
    \vspace{-0.15in}
\end{table}

\begin{takeaway}
Theorem~\ref{thm:bd_bound} relates the performance degradation under lossy speculative decoding to both the output distribution shift induced by the relaxed verifier and the task type. Both Proposition~\ref{prop:eta_mixing_deviation} and our empirical analysis demonstrate that \sys can reduce $\eta_t \mathrm{TV}_t$ through correction to mitigate performance degradation.
\end{takeaway}

\subsection{Empirical Study of Theoretical Analysis}
\label{subsec:empirical_analysis}

The analysis in Theorem~\ref{thm:bd_bound} attributes the performance degradation of relaxed SD to two factors. The first is the per-step distributional deviation $\mathrm{TV}(\tilde{q}_t, q_t)$ characterized in Theorem~\ref{thm:distributional_deviation_decomposition}, and the second is the positional task sensitivity $I_t$ defined in Definition~\ref{def:task_criticality}. It further predicts that $\eta$-mixing linearly contracts the deviation, as shown in Proposition~\ref{prop:eta_mixing_deviation}. We now empirically validate these three predictions.

\begin{figure}[t]
    \centering
    \includegraphics[width=\linewidth]{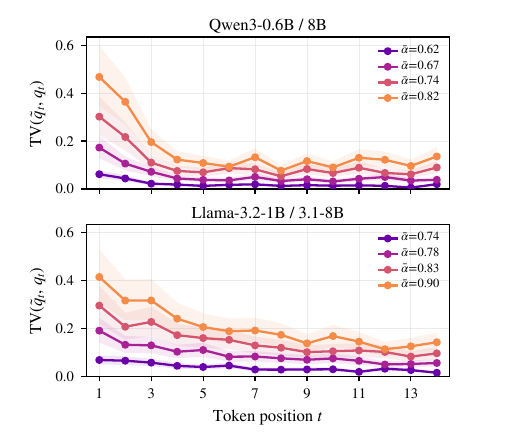}
    \caption{\textbf{Distributional Deviation of Relaxed Verifiers.} Per-position deviation $\mathrm{TV}(\tilde{q}_t, q_t)$ under relaxed verifiers concentrates at early token positions and is amplified by higher acceptance rates.}
    \label{fig:tv_deviation}
      \vspace{-0.1in}
\end{figure}

\paragraph{Distributional Deviation of Relaxed Verifiers}
We estimate the per-position deviation $\mathrm{TV}(\tilde{q}_t, q_t)$ on both security and utility benchmarks in Figure~\ref{fig:tv_deviation}.
The deviation shows a clear positional pattern: it is concentrated at early generation positions and rapidly diminishes as decoding proceeds. As the acceptance rate increases, the deviation becomes larger, suggesting that more permissive verifiers induce stronger departures from the target distribution. This amplification is most evident at early positions, where deviations can propagate through the autoregressive decoding process.

\begin{figure}[t]
    \centering
    \includegraphics[width=\linewidth]{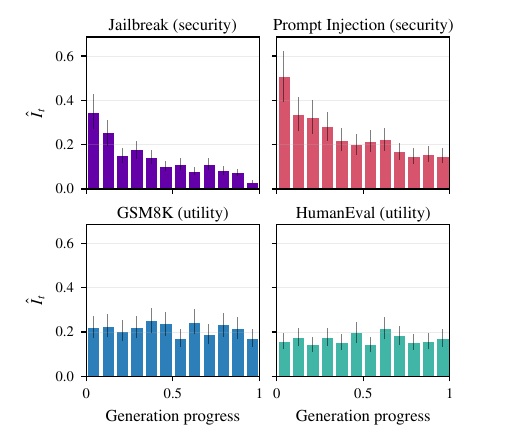}
    \caption{\textbf{Positional Task Sensitivity.} $\hat{I}_t$ measures the task sensitivity to tokens at different positions. Security benchmarks show early concentrated $\hat{I}_t$, while utility benchmarks show a more uniform distribution across generation.}
    \label{fig:task_sensitivity}
          \vspace{-0.15in}
\end{figure}

\paragraph{Positional Task Sensitivity}
Computing the task sensitivity $I_t$ exactly is intractable, since it requires evaluating $\mu_t(x;x_{<t})$ for every token $x\in\mathcal{V}$. We therefore estimate it with Monte Carlo sampling. For each sampled position $t$, we select the top-$k$ candidate tokens under $q_t$, generate $N$ target-model continuations for each candidate, and score the resulting completions with the task metric. We then use the range of the average scores as the estimate $\hat{I}_t$.

Figure~\ref{fig:task_sensitivity} shows how the positional task sensitivity $I_t$ changes over the course of generation. We find that security and utility benchmarks exhibit different $I_t$ patterns. For security benchmarks, $I_t$ is mostly concentrated in the early generated tokens, whereas for utility benchmarks, $I_t$ is more evenly distributed across the generation process. 
This difference arises because utility-oriented tasks, such as math and coding, often require step-by-step reasoning, where correctness depends on maintaining accuracy throughout the entire generation. In contrast, performance on security benchmarks is more strongly determined by the model's early response behavior, such as whether it gives a safe or unsafe initial response and whether the generation is hijacked toward unintended content.

\begin{takeaway}
Proposition~\ref{prop:eta_mixing_deviation} and the following empirical analysis suggest that the performance degradation is mainly controlled by $\eta_t \cdot \mathrm{TV}_t \cdot I_t$. On security tasks, both the distributional deviation and task sensitivity are concentrated in the first few tokens. This system-level result for draft-target model pairs aligns with prior findings on early-token sensitivity in single-model safety~\cite{qi2025safety,zhang2024large}. Therefore, using smaller $\eta_t$ at early positions can substantially mitigate the degradation. This observation motivates \sys in Algorithm~\ref{alg:sys}.
\end{takeaway}

\paragraph{Degradation Mitigation via \sys}

We first validate the relationship between $\eta$ and the distributional deviation predicted by Proposition~\ref{prop:eta_mixing_deviation}. Figure~\ref{fig:eta_mitigation} (left) plots the measured deviation $\mathrm{TV}(\tilde{q}_t^{\eta}, q_t)$ as $\eta$ varies from $1$ to $0$. The deviation decreases nearly linearly with $\eta$, closely matching the contraction bound in Proposition~\ref{prop:eta_mixing_deviation}.

We next examine whether applying $\eta$-verification correction to the first few tokens can recover model performance.
Figure~\ref{fig:eta_mitigation} (right) studies a representative setting where utility PRR remains high but security PRR drops substantially. Under the step correction schedule, security PRR increases with $L$ and gradually saturates.
This trend matches the positional pattern of $\mathrm{TV}_t \cdot I_t$ on security tasks and supports our theoretical analysis.
Large $L$ values, however, gradually reduce SPD by introducing more strict verification steps. In practice, \sys should choose the correction schedule according to the desired PRR--SPD trade-off.

\begin{figure}[t]
    \centering
    \includegraphics[width=\linewidth]{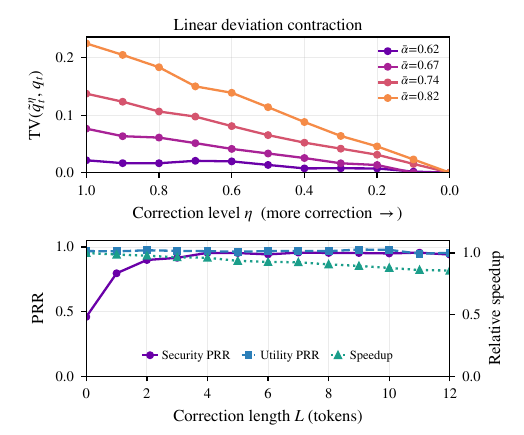}
    \caption{\textbf{Degradation Mitigation via \sys.} \emph{Left:} the measured deviation $\mathrm{TV}(\tilde{q}_t^{\eta}, q_t)$ contracts linearly in $\eta$. 
    \emph{Right:} security PRR shows a saturating increase as $L$ grows. }
    \label{fig:eta_mitigation}
          \vspace{-0.15in}
\end{figure}

\begin{takeaway}
Both theoretical and empirical analyses support the framework in three key aspects: relaxed verification induces distributional deviation, security tasks are highly sensitive to early-token errors, and $\eta$-mixing contracts deviation as predicted. These findings justify \sys: enforcing stronger early correction for security while relaxing later verification for efficiency.
\end{takeaway}

\begin{table*}[!t]
  \caption{
  \textbf{Impact of \sys on LLM System Safety/Security/Utility/Efficiency.}
  We evaluate \sys (default) with six relaxed/lossy verifiers on Qwen3-0.6B/8B and Llama3-1B/8B across four datasets. The average $\Delta_\mathrm{PRR}$ and $\Delta_\mathrm{SPD}$ show that \sys improves security and safety while largely maintaining target-model utility and lossy SD efficiency.
  }
  \vspace{-0.1in}
  \label{tab:main_results}
  \begin{center}
    \setlength{\tabcolsep}{4pt}
    \begin{tabularx}{\textwidth}{rl*{8}{>{\centering\arraybackslash}X}}
      \toprule
      \multirow{4}{*}{Relaxed Verifier}
        & \multirow{4}{*}{Model Family}
        & \multicolumn{4}{c}{Safety/Security Benchmarks}
        & \multicolumn{4}{c}{Utility Benchmarks} \\
        \cmidrule(lr){3-6}\cmidrule(lr){7-10}
        &
        & \multicolumn{2}{c}{Jailbreak-SD}
        & \multicolumn{2}{c}{OpenPromptInjection}
        & \multicolumn{2}{c}{HumanEval}
        & \multicolumn{2}{c}{GSM8K} \\
      \cmidrule(lr){3-4}\cmidrule(lr){5-6}\cmidrule(lr){7-8}\cmidrule(lr){9-10}
        & & $\Delta_\mathrm{PRR}$ & $\Delta_{\mathrm{SPD}}$
        & $\Delta_\mathrm{PRR}$ & $\Delta_{\mathrm{SPD}}$
        & $\Delta_\mathrm{PRR}$ & $\Delta_{\mathrm{SPD}}$
        & $\Delta_\mathrm{PRR}$ & $\Delta_{\mathrm{SPD}}$ \\
      \midrule
      \multirow{2}{*}{LossySD~\cite{leviathan2023fast}}
        & Qwen3-0.6B/8B & +.337 & +.050 & +.125 & -.010 & -.013 & +.001 & +.025 & -.020 \\
        & Llama3-1B/8B  & +.045 & +.003 & +.071 & -.002 & +.025 & -.002 & +.010 & -.005 \\
      \midrule
      \multirow{2}{*}{BiLD~\cite{kim2023speculative}}
        & Qwen3-0.6B/8B & +.295 & +.063 & +.086 & -.010 & -.054 & -.016 & +.026 & -.010 \\
        & Llama3-1B/8B  & +.039 & +.018 & +.073 & -.005 & -.002 & +.001 &  .000 & -.004 \\
      \midrule
      \multirow{2}{*}{SpecCascade~\cite{narasimhan2025faster}}
        & Qwen3-0.6B/8B & +.598 & +.083 & +.229 & -.016 & +.016 & -.005 & +.196 & +.006 \\
        & Llama3-1B/8B  & +.074 & +.037 & +.197 & +.001 & +.022 & +.004 & +.091 & +.008 \\
      \midrule
      \multirow{2}{*}{FSD~\cite{holsman2025fuzzy}}
        & Qwen3-0.6B/8B & +.122 & -.023 & +.084 & -.016 & -.007 & -.004 & +.007 & -.004 \\
        & Llama3-1B/8B  & +.051 & +.020 & +.122 & +.003 & +.008 & -.001 & +.002 & -.002 \\
      \midrule
      \multirow{2}{*}{FLy~\cite{li2026training}}
        & Qwen3-0.6B/8B & +.194 & -.038 & +.152 & -.018 & -.022 & -.006 & +.003 & -.005 \\
        & Llama3-1B/8B  & +.007 & +.002 & +.095 & -.007 & +.035 & -.001 & +.004 & -.003 \\
      \midrule
      \multirow{2}{*}{MARS~\cite{song2026mars}}
        & Qwen3-0.6B/8B & +.072 & +.021 & +.009 & -.014 & +.034 & -.017 & +.018 & -.012 \\
        & Llama3-1B/8B  & +.036 & +.032 & +.038 & +.001 & +.018 & +.002 & -.014 & -.003 \\
      \midrule
      \multirow{2}{*}{\textbf{Average Lossy SD}}
        & Qwen3-0.6B/8B & +.270 & +.026 & +.114 & -.014 & -.008 & -.008 & +.046 & -.007 \\
        & Llama3-1B/8B  & +.042 & +.019 & +.099 & -.001 & +.018 & +.001 & +.015 & -.001 \\
      \bottomrule
    \end{tabularx}
  \end{center}
    \vspace{-0.1in}
\end{table*}

\begin{figure*}[t]
    \centering
    \includegraphics[width=\linewidth]{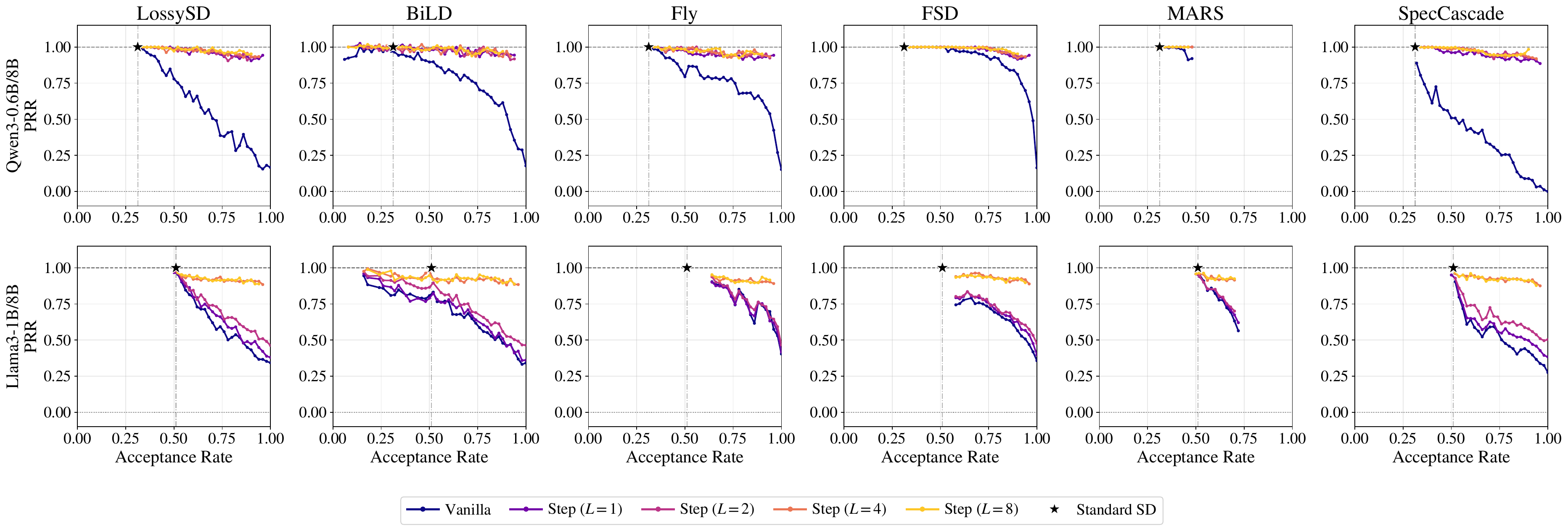}
    \caption{\textbf{Security PRR versus acceptance rate on Jailbreak-SD.} Columns are relaxed verifiers and rows are model families (Qwen3-0.6B/8B, Llama3-1B/8B). The star marks lossless speculative decoding. As the AR increases, vanilla PRR degrades sharply, while \sys (STEP~($L$)) keeps it near the lossless level.}
    \label{fig:step_l_curves_jailbreak}
      \vspace{-0.1in}
\end{figure*}

\section{Experiments}

In this section, we conduct a comprehensive evaluation of \sys to assess its ability to improve security while preserving utility and efficiency. We also design extensive ablation studies to demonstrate that \sys can be applied to LLM systems with different parameter scales while remaining effective.

\subsection{Experimental Setup}

We follow the same experimental setup as in Section~\ref{sec:measurement_es} and introduce two additional metrics to quantify the performance gains of \sys. 
We use the step schedule with $L=1$ as the default setting of \sys.

\paragraph{Performance Retention Rate Gain $\Delta_\mathrm{PRR}$}
$\Delta_\mathrm{PRR}$ measures how much additional target-model performance \sys retains on top of a relaxed verifier. To isolate the effect of \sys, we compare in a \emph{paired} manner: the two runs use the same verifier with the same hyperparameter $\theta$, and differ only in whether \sys is applied. Let $\mathrm{PRR}^{\theta}_{\text{\sys}}$ and $\mathrm{PRR}^{\theta}_{\text{vanilla}}$ denote the PRRs of the verifier with and without \sys under hyperparameter $\theta$. We define
\begin{align}
\Delta_\mathrm{PRR} = \frac{1}{|\Theta|}\sum_{\theta \in \Theta}
\left( \mathrm{PRR}^{\theta}_{\text{\sys}} - \mathrm{PRR}^{\theta}_{\text{vanilla}} \right),
\end{align}
where $\Theta$ is the hyperparameter set of the verifier. A positive $\Delta_\mathrm{PRR}$ indicates that \sys retains more target model performance than the vanilla verifier.

\paragraph{Inference Speedup Gain $\Delta_{\mathrm{SPD}}$}
$\Delta_{\mathrm{SPD}}$ quantifies the efficiency benefit of applying \sys under the same paired comparison. Let $\mathrm{SPD}^{\theta}_{\text{\sys}}$ and $\mathrm{SPD}^{\theta}_{\text{vanilla}}$ denote the inference speedups (SPD) of the verifier with and without \sys under the same hyperparameter $\theta$. We define
\begin{align}
\Delta_{\mathrm{SPD}} = \frac{1}{|\Theta|}\sum_{\theta \in \Theta}
\left( \mathrm{SPD}^{\theta}_{\text{\sys}} - \mathrm{SPD}^{\theta}_{\text{vanilla}} \right).
\end{align}
A small negative $\Delta_{\mathrm{SPD}}$ indicates that \sys introduces a minor efficiency overhead, while a positive value indicates that \sys is faster than the vanilla verifier.
\begin{table*}[!t]
  \caption{
  \textbf{Impact of the correction length $L$ on \sys.} 
  $\Delta_\mathrm{PRR}$ increases consistently with $L$, indicating performance gains from correcting the generation of more tokens, especially on Llama3-1B/8B. $\Delta_\mathrm{SPD}$ changes only slightly as $L$ increases, indicating that this behavior incurs only a minor efficiency overhead. Results are averaged across six lossy-SD methods. 
  }
  \vspace{-0.1in}
  \label{tab:eta_length}
  \begin{center}
    \begin{tabularx}{\textwidth}{c*{16}{>{\centering\arraybackslash}X}}
      \toprule
      \multirow{4}{*}{$L$}
        & \multicolumn{8}{c}{Qwen3-0.6B/8B}
        & \multicolumn{8}{c}{Llama3-1B/8B} \\
      \cmidrule(lr){2-9}\cmidrule(lr){10-17}
        & \multicolumn{2}{c}{Jailbreak-SD}
        & \multicolumn{2}{c}{Prompt Inj.}
        & \multicolumn{2}{c}{HumanEval}
        & \multicolumn{2}{c}{GSM8K}
        & \multicolumn{2}{c}{Jailbreak-SD}
        & \multicolumn{2}{c}{Prompt Inj.}
        & \multicolumn{2}{c}{HumanEval}
        & \multicolumn{2}{c}{GSM8K} \\
      \cmidrule(lr){2-3}\cmidrule(lr){4-5}\cmidrule(lr){6-7}\cmidrule(lr){8-9}
      \cmidrule(lr){10-11}\cmidrule(lr){12-13}\cmidrule(lr){14-15}\cmidrule(lr){16-17}
        & $\Delta_\mathrm{PRR}$ & $\Delta_\mathrm{SPD}$
        & $\Delta_\mathrm{PRR}$ & $\Delta_\mathrm{SPD}$
        & $\Delta_\mathrm{PRR}$ & $\Delta_\mathrm{SPD}$
        & $\Delta_\mathrm{PRR}$ & $\Delta_\mathrm{SPD}$
        & $\Delta_\mathrm{PRR}$ & $\Delta_\mathrm{SPD}$
        & $\Delta_\mathrm{PRR}$ & $\Delta_\mathrm{SPD}$
        & $\Delta_\mathrm{PRR}$ & $\Delta_\mathrm{SPD}$
        & $\Delta_\mathrm{PRR}$ & $\Delta_\mathrm{SPD}$ \\
      \midrule
        1 & +.270 & +.026 & +.114 & -.014 & -.008 & -.008 & +.046 & -.007 & +.042 & +.019 & +.099 & -.001 & +.018 & +.001 & +.015 & -.001 \\
        2 & +.279 & +.006 & +.152 & -.021 & -.009 & -.009 & +.045 & -.009 & +.096 & +.014 & +.109 & -.011 & +.023 & +.000 & +.021 & -.011 \\
        4 & +.282 &  .000 & +.213 & -.026 & -.008 & -.015 & +.032 & -.018 & +.321 & -.008 & +.140 & -.022 & +.023 & -.001 & +.027 & -.005 \\
        8 & +.285 & -.047 & +.224 & -.035 & -.005 & -.023 & +.034 & -.035 & +.325 & -.052 & +.135 & -.035 & +.027 & -.010 & +.033 & -.019 \\
      \bottomrule
    \end{tabularx}
  \end{center}
    \vspace{-0.15in}
\end{table*}
\begin{table}[t]
\centering
\caption{
\textbf{Evaluation of \sys on AgentDojo.}
We evaluate \sys with Qwen3-0.6B/8B on the realistic, long-context prompt injection benchmark AgentDojo. The \textit{utility under attack} of the target and draft models is .264 and .078, respectively. \sys improves the average PRR while preserving SPD.
}
\label{tab:agentdojo_results}
\setlength{\tabcolsep}{4pt}
\begin{tabular}{lcccccc}
\toprule
\multirow{3}{*}{Method} & \multicolumn{2}{c}{w/o \sys} & \multicolumn{2}{c}{w/ \sys} & \multirow{3}{*}{$\Delta_{\mathrm{PRR}}\uparrow$} & \multirow{3}{*}{$\Delta_{\mathrm{SPD}}\uparrow$} \\
\cmidrule(lr){2-3} \cmidrule(lr){4-5}
& PRR$\uparrow$ & SPD$\uparrow$ & PRR$\uparrow$ & SPD$\uparrow$ & & \\
\midrule
LossySD     & .548 & 1.919 & .723 & 1.947 & +.175 & +.028 \\
BiLD        & .932 & 1.872 & .983 & 1.891 & +.051 & +.019 \\
SpecCascade & .169 & 2.081 & .328 & 2.158 & +.159 & +.077 \\
FSD         & .869 & 1.839 & .955 & 1.847 & +.086 & +.007 \\
FLy         & .793 & 1.798 & .904 & 1.799 & +.111 & +.001 \\
MARS        & .863 & 1.882 & .953 & 1.910 & +.090 & +.028 \\
\midrule
\textbf{Average} & .696 & 1.898 & .808 & 1.925 & +.112 & +.027 \\
\bottomrule
\end{tabular}
  \vspace{-0.15in}
\end{table}

\subsection{Main Results}

Table~\ref{tab:main_results} summarizes the paired improvements of \sys, measured by $\Delta_\mathrm{PRR}$ and $\Delta_{\mathrm{SPD}}$, over six relaxed SD verifiers on one safety, two security, and two utility benchmarks. Figure~\ref{fig:step_l_curves_jailbreak} further reports the safety PRR on the Jailbreak benchmark as a function of the acceptance rate. Table~\ref{tab:agentdojo_results} shows the security results of evaluating \sys on AgentDojo. For each verifier and model family, we compare standard lossless SD, the vanilla relaxed verifier, and the corresponding verifier augmented with \sys.

\paragraph{\sys Mitigates Safety and Security Degradation}
Across most verifier-model combinations, \sys yields positive gains. For Qwen3-0.6B/8B, the average $\Delta_\mathrm{PRR}$ reaches $+0.270$ on Jailbreak and $+0.114$ on Prompt Injection. The largest single improvement is observed for SpecCascade, where \sys improves PRR by up to $+0.598$. Figure~\ref{fig:step_l_curves_jailbreak} further shows that vanilla relaxed verifiers lose PRR rapidly as the acceptance rate increases, whereas \sys keeps PRR close to that of the standard verifier. This confirms that correcting early accepted positions recovers most of the lost safety/security performance while retaining high acceptance rates.

\paragraph{\sys Preserves Utility Performance}
Unlike the substantial gains on safety/security benchmarks, \sys has little effect on utility benchmarks. For Qwen3-0.6B/8B, the average $\Delta_\mathrm{PRR}$ is $-0.008$ on HumanEval and $+0.046$ on GSM8K. 
These small changes show that early-position correction does not materially hurt task utility, consistent with our finding that utility-oriented tasks are less sensitive to the deviations corrected by \sys. In some cases, corrected prefixes even improve utility, such as the $+0.196$ PRR gain of SpecCascade on GSM8K.

\paragraph{\sys Introduces Negligible Efficiency Overhead}
\sys has only a minor impact on decoding efficiency, with $\Delta_{\mathrm{SPD}}$ remaining within $\pm 0.02$ across most settings. Stricter early verification may reject more draft tokens, but the corrected prefixes can improve later acceptance rates, making the overall speed impact nearly neutral. In some cases, \sys is even faster than the vanilla verifier. For example, SpecCascade gains $+0.083$ in speedup on Jailbreak with Qwen3-0.6B/8B, with an average gain of $+0.026$ in this setting.

\subsection{Effect of the Correction Length $L$}

Table~\ref{tab:eta_length} reports the effect of varying the step correction length $L \in \{1,2,4,8\}$ on $\Delta_\mathrm{PRR}$ and $\Delta_\mathrm{SPD}$. The two model families exhibit different sensitivities to $L$. For Qwen3-0.6B/8B, the safety gain on Jailbreak is already close to saturation at $L=1$, with the average $\Delta_\mathrm{PRR}$ increasing only slightly from $+0.270$ at $L=1$ to $+0.285$ at $L=8$. In contrast, Llama3-1B/8B benefits substantially from a longer correction window. Its average $\Delta_\mathrm{PRR}$ on Jailbreak increases from $+0.042$ at $L=1$ to $+0.321$ at $L=4$, and then plateaus at $+0.325$ when $L=8$.

Figure~\ref{fig:step_l_curves_jailbreak} shows the same trend across acceptance rates. For Qwen3, all Step curves separate clearly from the vanilla relaxed verifier even with $L=1$, and larger $L$ brings only marginal additional improvement. For Llama3, however, the Step curves move away from the vanilla verifier progressively as $L$ increases, indicating that correcting only the first accepted position is insufficient for the Llama3 family.

This difference is consistent with the distributional-deviation measurements in Figure~\ref{fig:tv_deviation}. For Qwen3, the per-position deviation $\mathrm{TV}(\tilde{q}_t,q_t)$ drops rapidly after the first few positions, suggesting that most harmful deviation is concentrated at the beginning of the generated response. Therefore, correcting the first accepted position already restores most of the target-consistent behavior. For Llama3, the deviation decays more slowly across positions, so relaxed verification can continue to steer the committed prefix away from the target model several tokens after the first acceptance. A longer correction window is thus needed to prevent this drift.

Increasing $L$ also introduces a mild efficiency cost. On Jailbreak, the average $\Delta_{\mathrm{SPD}}$ decreases from $+0.026$ at $L=1$ to $-0.047$ at $L=8$ for Qwen3, and from $+0.019$ to $-0.052$ for Llama3. Overall, $L$ provides a controllable trade-off between safety recovery and decoding speed: $L=1$ is sufficient for Qwen3, whereas $L=4$ recovers nearly all safety gains for Llama3 with only a small average speed change of $-0.008$.

\paragraph{How to Select $L$}
Table~\ref{tab:eta_length} suggests $L \leq 8$ as an empirical choice across tasks. Conceptually, the choice of $L$ should be related to the depth of safety alignment in the model~\cite{qi2025safety}. When $L$ matches the effective depth of safety alignment, the system can ensure that generation begins from a distribution consistent with the model's safety-aligned behavior. A more robust approach is to perform an ablation study over $L$ on a small calibration dataset and obtain curves similar to those in Figure~\ref{fig:eta_mitigation}, characterizing the trade-offs among Security/Safety-PRR, Utility-PRR, and SPD. One can then select $L$ according to the requirements of the deployment setting. When no representative calibration dataset is available, using a relatively large $L$ is a safer choice, since increasing $L$ has only a limited impact on the overall decoding speedup and is generally worthwhile given the corresponding improvement in system-level performance.

\subsection{Effect of Correction Schedule $\{\eta_j\}_{j\ge 1}$}

\begin{table}[t]
  \caption{
  \textbf{Impact of the correction schedule $\eta$ on \sys.} 
  S/L/P denote the STEP/LINEAR/POWER-LAW schedules, respectively. All three schedules effectively improve PRR. In practice, STEP avoids complex distribution-mixing operations, giving it an advantage in preserving SPD.
  }
  \vspace{-0.1in}
  \footnotesize
  \label{tab:eta_schedule}
  \begin{center}
    \setlength{\tabcolsep}{1pt}
    \begin{tabularx}{\columnwidth}{rc*{7}{>{\centering\arraybackslash}X}}
      \toprule
      \multirow{4}{*}{$\eta$ Schedule}
        & \multicolumn{4}{c}{Jailbreak-SD}
        & \multicolumn{4}{c}{HumanEval} \\
      \cmidrule(lr){2-5}\cmidrule(lr){6-9}
        & \multicolumn{2}{c}{Qwen3}
        & \multicolumn{2}{c}{Llama3}
        & \multicolumn{2}{c}{Qwen3}
        & \multicolumn{2}{c}{Llama3} \\
      \cmidrule(lr){2-3}\cmidrule(lr){4-5}\cmidrule(lr){6-7}\cmidrule(lr){8-9}
        & $\Delta_\mathrm{PRR}$ & $\Delta_\mathrm{SPD}$
        & $\Delta_\mathrm{PRR}$ & $\Delta_\mathrm{SPD}$
        & $\Delta_\mathrm{PRR}$ & $\Delta_\mathrm{SPD}$
        & $\Delta_\mathrm{PRR}$ & $\Delta_\mathrm{SPD}$ \\
      \midrule
      S ($L{=}1$)              & +.337 & +.050 & +.045 & +.003 & -.013 & +.001 & +.025 & -.002 \\
      S ($L{=}4$)              & +.349 & +.023 & +.292 & -.001 & -.004 & -.006 & +.035 & -.004 \\
      \midrule
      L ($L{=}1$)              & +.158 & -.039 & +.019 & -.010 & -.008 & -.012 & +.027 & -.010 \\
      L ($L{=}8$)              & +.313 & -.111 & +.207 & -.110 & -.009 & -.014 & +.037 & -.017 \\
      \midrule
      P ($\gamma{=}2, L{=}1$)   & +.248 & -.065 & +.027 & -.015 & +.005 & -.014 & +.026 & -.009 \\
      P ($\gamma{=}2, L{=}8$)   & +.349 & -.148 & +.266 & -.119 & -.015 & -.020 & +.033 & -.014 \\
      P ($\gamma{=}4, L{=}1$)   & +.313 & -.111 & +.036 & -.019 & -.007 & -.021 & +.034 & -.011 \\
      P ($\gamma{=}4, L{=}8$)   & +.347 & -.158 & +.291 & -.186 & +.012 & -.030 & +.038 & -.015 \\
      \bottomrule
    \end{tabularx}
  \end{center}
  \vspace{-0.15in}
\end{table}

The experiments above use the step schedule, which applies full correction to the first $L$ positions and then switches directly to the relaxed verifier. A natural question is whether a smoother transition would be preferable. Table~\ref{tab:eta_schedule} studies this on LossySD by replacing the step schedule with linear and power-law schedules, varying the transition length $L \in \{1, 8\}$ and the curvature $\gamma \in \{2, 4\}$, and comparing against the step schedule at $L \in \{1, 4\}$, under the same paired protocol on Jailbreak and HumanEval. The smooth schedules reproduce the qualitative behavior of the step schedule: security gains increase as the transition becomes longer and more delayed, reaching $+0.347$ on Qwen3 and $+0.291$ on Llama3 for the power-law schedule with $L=8$ and $\gamma=4$, while $\Delta_\mathrm{PRR}$ on HumanEval stays flat across all configurations.

However, the step schedule provides a better security--efficiency trade-off. On Qwen3, the step schedule with $L=1$ matches the best power-law gain ($+0.337$ versus $+0.349$) while being faster ($+0.050$ versus $-0.148$ in speed); on Llama3, the step schedule with $L=4$ matches it ($+0.292$ versus $+0.291$) with a much smaller speed cost ($-0.001$ versus $-0.186$). Intermediate correction levels are less efficient in our setting, as they extend correction beyond the most security-sensitive positions; the step schedule is also simpler to implement, since each position uses either standard or relaxed verification without materializing vocabulary-level mixtures. We therefore adopt the step schedule as the default configuration of \sys.
\begin{table}[t]
\centering
\footnotesize
\caption{
\textbf{Evaluation on batched inference with larger target models.}
We evaluate \sys with larger target models and report the average PRR and SPD across four datasets and six relaxed verifiers. \sys improves PRR while preserving SPD.
}
\label{tab:ablation_target_model}
\setlength{\tabcolsep}{3pt}
\begin{tabular}{llcccc}
\toprule
\multirow{3}{*}{Benchmark} & \multirow{3}{*}{\sys} & \multicolumn{2}{c}{Qwen3-0.6B/32B} & \multicolumn{2}{c}{Llama3-1B/70B} \\
\cmidrule(lr){3-4} \cmidrule(lr){5-6}
& & PRR & SPD & PRR & SPD \\
\midrule
\multirow{2}{*}{Jailbreak-SD} & w/o & 0.727 & 1.985 & 0.571 & 3.738 \\
& w/ & 0.973 & 1.752 & 0.690 & 3.947 \\
\midrule
\multirow{2}{*}{OpenPromptInjection} & w/o    & 0.652 & 1.853 & 0.512 & 3.153 \\
& w/ & 0.855 & 1.882 & 0.956 & 3.161 \\
\midrule
\multirow{2}{*}{HumanEval} & w/o    & 0.653 & 2.392 & 0.500 & 3.943 \\
& w/ & 0.627 & 2.538 & 0.667 & 3.874 \\
\midrule
\multirow{2}{*}{GSM8K} & w/o    & 0.820 & 2.637 & 0.791 & 3.785 \\
& w/ & 0.843 & 2.497 & 0.791 & 3.737 \\
\bottomrule
\end{tabular}
  \vspace{-0.1in}
\end{table}

\subsection{Ablation Study}

We further examine whether the gains of \sys remain stable under different deployment and decoding configurations, varying the target-model scale, the decoding temperature $\tau$, and the speculative length $k$. The $\tau$ and $k$ ablations use the same paired protocol on the two security benchmarks with Qwen3-0.6B/8B, averaged over the six relaxed verifiers and their hyperparameter sweeps.

\paragraph{Batched Inference with Larger Target Models}
Speculative decoding is often used in practice to accelerate larger models. In Table~\ref{tab:ablation_target_model}, we use Qwen3-32B and Llama3.3-70B-Instruct as target models and evaluate the improvements in PRR and SPD with and without \sys. We find that speculative decoding generally provides more pronounced speedups for larger target models. Meanwhile, \sys effectively improves the safety/security performance while preserving the acceleration benefits.

\paragraph{Decoding Temperature $\tau$}
Higher temperatures flatten the draft and target distributions and make the security-critical opening tokens more stochastic, which could plausibly affect \sys. Table~\ref{tab:ablation_temperature} varies $\tau \in \{0, 0.3, 0.7, 1.0\}$ with $k=5$ fixed. The security gain is largely insensitive to temperature: $\Delta_\mathrm{PRR}$ stays within $+0.238$ to $+0.274$ on Jailbreak and remains consistently positive on Prompt Injection ($+0.114$ to $+0.200$), while $\Delta_{\mathrm{SPD}}$ stays within $\pm0.026$. The benefit of \sys is therefore not an artifact of greedy decoding and carries over to stochastic sampling.

\begin{table}[t]
\centering
\setlength{\tabcolsep}{14pt}
\caption{
\textbf{Impact of the decoding temperature $\tau$ on \sys.}
Although different temperatures introduce generation uncertainty and PRR/SPD fluctuations, \sys remains effective across all tested temperatures.
}
  \vspace{-0.05in}
\label{tab:ablation_temperature}
\footnotesize
\begin{tabular}{lcccc}
\toprule
\multirow{3}{*}{$\tau$} & \multicolumn{2}{c}{Jailbreak-SD} & \multicolumn{2}{c}{OpenPromptInjection} \\
\cmidrule(lr){2-3}\cmidrule(lr){4-5}
& $\Delta_\mathrm{PRR}$ & $\Delta_\mathrm{SPD}$ & $\Delta_\mathrm{PRR}$ & $\Delta_\mathrm{SPD}$ \\
\midrule
0   & +.270 & +.026 & +.114 & -.014 \\
0.3 & +.269 & +.020 & +.155 & -.015 \\
0.7 & +.238 & +.014 & +.200 & -.017 \\
1.0 & +.274 & +.005 & +.168 & -.018 \\
\bottomrule
\end{tabular}
\end{table}

\begin{table}[t]
\centering
\footnotesize
\setlength{\tabcolsep}{14pt}
\caption{\textbf{Impact of the speculative length $k$ on \sys.}
Different speculative lengths affect the generation process and acceptance rate, while \sys maintains stable safety/security recovery across all speculative lengths.
}

  \vspace{-0.05in}
\label{tab:ablation_spec_length}
\begin{tabular}{l cc cc}
\toprule
\multirow{3}{*}{$k$} & \multicolumn{2}{c}{Jailbreak-SD} & \multicolumn{2}{c}{OpenPromptInjection} \\
\cmidrule(lr){2-3}\cmidrule(lr){4-5}
& $\Delta_\mathrm{PRR}$ & $\Delta_\mathrm{SPD}$ & $\Delta_\mathrm{PRR}$ & $\Delta_\mathrm{SPD}$ \\
\midrule
3 & +.266 & +.032 & +.114 & -.007 \\
4 & +.261 & +.037 & +.113 & -.012 \\
5 & +.270 & +.026 & +.114 & -.014 \\
6 & +.268 & +.030 & +.113 & -.016 \\
7 & +.270 & +.024 & +.128 & -.016 \\
8 & +.277 & +.018 & +.118 & -.016 \\
\bottomrule
\end{tabular}
\vspace{-0.15in}
\end{table}

\paragraph{Speculative Length $k$}
A larger $k$ commits more draft tokens per verification round and could in principle amplify the draft model's influence. Table~\ref{tab:ablation_spec_length} varies $k \in \{3, \dots, 8\}$ with $\tau=0$ fixed and shows that this does not occur: $\Delta_\mathrm{PRR}$ fluctuates only between $+0.261$ and $+0.277$ on Jailbreak and between $+0.113$ and $+0.128$ on Prompt Injection, while $\Delta_{\mathrm{SPD}}$ remains small ($-0.016$ to $+0.037$). This stability follows by design: the correction schedule is defined over global token positions, so \sys always corrects the first $L$ positions of the response regardless of how the speculation rounds are segmented.

\section{Discussion}

\paragraph{Adaptive Attacks}
Since \sys corrects only the first $L$ positions, an attacker aware of the defense could try to push the security-critical decision beyond the correction window. For example, the attacker could first elicit a benign-looking opening longer than $L$ tokens and then steer the subsequent generation toward harmful content. 
We instantiate this attack with 200 prompt injections that first repeat the original question to delay malicious behavior beyond the correction window, using the Step schedule ($L=12$). Table~\ref{tab:adaptive_attack} shows that \sys increases AFR by 19 percentage points and PRR by 55.9--71.8 percentage points over the average of six lossy-SD methods, approaching target-model performance. For queries where the target succeeds but the draft fails, early correction creates a target-aligned prefix that increases subsequent draft-target disagreement, reducing draft-token acceptance beyond the correction window and preserving safety/security.

\begin{table}[t]
\centering
\footnotesize
\caption{
\textbf{Evaluation under adaptive attacks.}
We evaluate \sys against adaptive prompt injections that delay malicious behavior beyond the correction window. By correcting early generation, \sys steers the response toward the target-model trajectory and remains effective against these adaptive attacks.
}
  \vspace{-0.05in}
\label{tab:adaptive_attack}
\setlength{\tabcolsep}{10.5pt}
\begin{tabular}{lcccc}
\toprule
\multirow{3}{*}{Setting} & \multicolumn{2}{c}{Qwen3-0.6B/8B} & \multicolumn{2}{c}{Llama3-1B/8B} \\
\cmidrule(lr){2-3} \cmidrule(lr){4-5}
& AFR$\uparrow$ & PRR$\uparrow$ & AFR$\uparrow$ & PRR$\uparrow$ \\
\midrule
Target   & .625 & --   & .580 & --   \\
Draft    & .285 & --   & .315 & --   \\
\midrule
w/o \sys & .415 & .382 & .360 & .168 \\
w/ \sys  & .605 & .941 & .550 & .886 \\
\midrule
$\Delta$ & +.190 & +.559 & +.190 & +.718 \\
\bottomrule
\end{tabular}
\end{table}

\begin{table}[t]
\centering
\setlength{\tabcolsep}{9pt}
\caption{\textbf{Runtime overhead of \sys.}
We report relative inference-latency changes in our vLLM implementation, in units of $10^{-3}\%$. Positive values indicate overhead and negative values indicate speedup; overhead is negligible.}
  \vspace{-0.05in}
\label{tab:runtime_overhead}
\footnotesize
\begin{tabular}{lcccc}
\toprule
& \multicolumn{2}{c}{Qwen3} & \multicolumn{2}{c}{Llama3} \\
\cmidrule(lr){2-3}\cmidrule(lr){4-5}
Method & 0.6B/8B & 0.6B/32B & 1B/8B & 1B/70B \\
\midrule
LossySD & +0.001 & +0.001 & +0.001 & +0.001 \\
BiLD & +0.001 & +0.001 & +0.001 & +0.001 \\
SpecCascade & -2.330 & -0.583 & -1.830 & -0.227 \\
FSD & +0.775 & +0.194 & +0.611 & +0.076 \\
FLy & +0.001 & +0.001 & +0.001 & +0.001 \\
MARS & +0.001 & +0.001 & +0.001 & +0.001 \\
\bottomrule
\end{tabular}
  \vspace{-0.15in}
\end{table}

\paragraph{Runtime Overhead in Real-World Systems}
Since \sys performs operations such as distribution fusion over the first $L$ tokens, it may introduce additional implementation overhead. We implement a lightweight \sys layer on vLLM and measure the time required to generate a fixed number of tokens across all datasets. Table~\ref{tab:runtime_overhead} shows that the additional overhead remains below $0.001\%$ because \sys reuses speculative-decoding intermediates and requires negligible computation relative to the target-model forward pass. For SpecCascade, \sys slightly improves speed by eliminating verifier-side recovery softmax and sampling. FSD incurs the largest positive overhead because it recomputes the full recovery distribution $R$, but the overhead remains negligible.

\paragraph{Early-Token Criticality for Safety and Security}
Prior studies~\cite{qi2025safety,zhang2024large} show that the first few generated tokens strongly influence the safety of a response. These studies examine intrinsic early-token sensitivity within a single model. We instead study system-level safety and security under lossy speculative decoding, where target-model decoding defines the performance upper bound, but relaxed verification can introduce early deviations from its generation trajectory. We show that these system-induced deviations degrade safety and security much more than utility, revealing a security--utility asymmetry not studied in prior work. This distinction motivates \sys, which restores stricter target verification at early positions.

\section{Conclusion}

In this paper, we presented the \emph{first} systematic study of the security implications of lossy speculative decoding. We revealed a clear safety/security--utility asymmetry, where relaxed verification improves efficiency but increases jailbreak and prompt injection risks much faster than it degrades utility. Our analysis attributed this asymmetry to early-position draft tokens, whose deviations from the target distribution shape the initial direction of the response. Based on this insight, we proposed \sys, a position-aware method that strengthens early verification while keeping efficient verification elsewhere. More broadly, our results show that inference optimizations can silently weaken model security, highlighting the need to include security evaluation in their utility and efficiency assessment.


\ifCLASSOPTIONcompsoc
  \section*{Acknowledgments}
\else
  \section*{Acknowledgment}
\fi

We thank the reviewers and shepherd for their constructive comments on our work. 
The Authors acknowledge the National Artificial Intelligence Research Resource (NAIRR) Pilot for contributing to this research result.
Neil Gong was supported in part by NSF under grant No. 2450935, 2112562, and  2625580.



%

\bibliographystyle{IEEEtran}
\bibliography{references}

\appendices

\section{Proofs of Theoretical Results}
\label{app:proofs}

\subsection{Proof of Theorem~\ref{thm:one_step_output_distribution}}
\label{app:proof_one_step_output_distribution}

\begin{proof}
Let $A_t$ denote the event that the draft token at step $t$ is accepted. Since $\tilde{x}_t\sim p_t$ and the verifier accepts token $u$ with probability $\alpha_t(u)$, the overall acceptance rate is
\begin{align}
\Pr_S(A_t\mid x_{<t})
= \sum_{u\in\mathcal{V}} p_t(u) \alpha_t(u)
= \mathbb{E}_{u\sim p_t}[\alpha_t(u)].
\end{align}
We expand $\tilde{q}_t(x)$ by conditioning on whether the draft token is accepted or rejected:
\begin{align}
\tilde{q}_t(x)
&= \Pr_S(x_t=x, A_t\mid x_{<t})\notag \\
&+ \Pr_S(A_t^c\mid x_{<t})\Pr_S(x_t=x\mid A_t^c,x_{<t}).
\end{align}
For the first term, using $\Pr_S(\tilde{x}_t=x\mid x_{<t})=p_t(x)$ and $\Pr_S(A_t\mid\tilde{x}_t=x,x_{<t})=\alpha_t(x)$:
\begin{align}
\Pr_S(x_t=x,A_t\mid x_{<t}) = p_t(x)\alpha_t(x).
\end{align}
For the second term, rejection triggers sampling from $R_t$, so $\Pr_S(x_t=x\mid A_t^c,x_{<t})=R_t(x)$. This gives
\begin{align}
\tilde{q}_t(x)
= p_t(x)\alpha_t(x)
+ \bigl(1-\mathbb{E}_{u\sim p_t}[\alpha_t(u)]\bigr)R_t(x),
\end{align}
which completes the proof.
\end{proof}

\subsection{Proof of Corollary~\ref{cor:lossless_sd}}
\label{app:proof_lossless_sd}

\begin{proof}
Applying Theorem~\ref{thm:one_step_output_distribution} with $\alpha_t = \alpha_t^{\mathrm{sd}}$ and $R_t = R_t^{\mathrm{sd}}$, the acceptance term evaluates to
\begin{align}
p_t(x)\alpha_t^{\mathrm{sd}}(x)
&= p_t(x)\min\left(1,\frac{q_t(x)}{p_t(x)}\right) \\
&= \min\bigl(p_t(x),q_t(x)\bigr).
\end{align}
Since $\sum_{u}(p_t(u)-q_t(u))=0$, the positive and negative parts have equal total mass, so the rejection probability satisfies
\begin{align}
1 - \mathbb{E}_{u\sim p_t}[\alpha_t^{\mathrm{sd}}(u)]
&= \sum_{u\in\mathcal{V}}(p_t(u)-q_t(u))_+ \\
&= \sum_{u\in\mathcal{V}}(q_t(u)-p_t(u))_+.
\end{align}
Substituting both into~\eqref{eq:induced_output_distribution} gives
\begin{align}
\tilde{q}_t^{\mathrm{sd}}(x)
&= q_t(x),
\end{align}
which completes the proof.
\end{proof}

\subsection{Proof of Theorem~\ref{thm:distributional_deviation_decomposition}}
\label{app:proof_distributional_deviation_decomposition}

\begin{proof}
Let $\bar\alpha_t\triangleq\mathbb{E}_{u\sim p_t}[\alpha_t(u)]$ and $\bar\alpha_t^{\mathrm{sd}}\triangleq\mathbb{E}_{u\sim p_t}[\alpha_t^{\mathrm{sd}}(u)]$. Substituting $\tilde{q}_t$ from Theorem~\ref{thm:one_step_output_distribution} and $q_t$ from Corollary~\ref{cor:lossless_sd},
\begin{align}
\mathrm{TV}(\tilde{q}_t,q_t)
&= \frac{1}{2}\sum_{x\in\mathcal{V}}\Bigl|p_t(x)\alpha_t(x)+(1-\bar\alpha_t)R_t(x) \notag\\
&\quad-p_t(x)\alpha_t^{\mathrm{sd}}(x)-(1-\bar\alpha_t^{\mathrm{sd}})R_t^{\mathrm{sd}}(x)\Bigr|.
\end{align}
Using the definition $\delta_t(x)=\alpha_t(x)-\alpha_t^{\mathrm{sd}}(x)$, which gives $\bar\alpha_t-\bar\alpha_t^{\mathrm{sd}}=\mathbb{E}_{p_t}[\delta_t]$, we collect terms:
\begin{align}
\mathrm{TV}(\tilde{q}_t,q_t)
&= \frac{1}{2}\sum_{x\in\mathcal{V}}\Bigl|p_t(x)\delta_t(x)-\mathbb{E}_{p_t}[\delta_t]R_t^{\mathrm{sd}}(x) \notag\\
&\quad+(1-\bar\alpha_t)\bigl(R_t(x)-R_t^{\mathrm{sd}}(x)\bigr)\Bigr|.
\end{align}
Applying the triangle inequality separates the contribution of the acceptance from that of the recovery, which yields
\begin{align}
\mathrm{TV}(\tilde{q}_t,q_t)
&\le \frac{1}{2}\sum_{x\in\mathcal{V}}\bigl|p_t(x)\delta_t(x)-\mathbb{E}_{p_t}[\delta_t]R_t^{\mathrm{sd}}(x)\bigr| \notag\\
&\quad+(1-\bar\alpha_t)\mathrm{TV}(R_t,R_t^{\mathrm{sd}}).
\end{align}
For the first term, we apply the triangle inequality once more and use the condition $\sum_{x}R_t^{\mathrm{sd}}(x)=1$. This yields
\begin{align}
&\frac{1}{2}\sum_{x\in\mathcal{V}}\bigl|p_t(x)\delta_t(x)-\mathbb{E}_{p_t}[\delta_t]R_t^{\mathrm{sd}}(x)\bigr| \notag\\
&\le\frac{1}{2}\mathbb{E}_{p_t}[|\delta_t|]
+\frac{1}{2}\bigl|\mathbb{E}_{p_t}[\delta_t]\bigr| \le \mathbb{E}_{p_t}[|\delta_t|],
\end{align}
where the last step uses $|\mathbb{E}_{p_t}[\delta_t]|\le\mathbb{E}_{p_t}[|\delta_t|]$. Combining both bounds gives the stated inequality.
\end{proof}

\subsection{Proof of Theorem~\ref{thm:bd_bound}}
\label{app:proof_bd_bound}

\begin{proof}
Let $\mathrm{TV}_t(x\mid x_{<t})\triangleq\tilde{q}_t(x\mid x_{<t})-q_t(x\mid x_{<t})$ be the signed pointwise deviation. Define hybrids $H^{(s)}$ using $\tilde{q}_t$ for $t\le s$ and $q_t$ for $t>s$, so $H^{(0)}=Q$ and $H^{(T)}=\tilde{Q}$. Telescoping over the hybrid chain yields:
\begin{align}
\mathbb{E}_{\tilde{Q}}[M]-\mathbb{E}_{Q}[M]
=\sum_{t=1}^{T}\bigl(\mathbb{E}_{H^{(t)}}[M]-\mathbb{E}_{H^{(t-1)}}[M]\bigr).
\end{align}
Since $H^{(t)}$ and $H^{(t-1)}$ differ only at position $t$ and share the same prefix marginal $\tilde{Q}_{<t}$, the $t$-th term equals $\Delta_t$, where
\begin{align}
\Delta_t \triangleq \mathbb{E}_{x_{<t}\sim\tilde{Q}_{<t}}\Bigl[
\sum_{x\in\mathcal{V}}\mathrm{TV}_t(x\mid x_{<t})\mu_t(x;x_{<t})\Bigr].
\end{align}
Hence $\mathbb{E}_{\tilde{Q}}[M]-\mathbb{E}_{Q}[M]=\sum_{t}\Delta_t$. For each fixed $x_{<t}$, the TV-oscillation inequality gives $\bigl|\sum_x\mathrm{TV}_t(x\mid x_{<t})\mu_t(x;x_{<t})\bigr|\le\mathrm{TV}_t(x_{<t})\cdot\mathrm{osc}\bigl(\mu_t(\cdot;x_{<t})\bigr)$. Taking expectations over $x_{<t}\sim\tilde{Q}_{<t}$,
\begin{align}
|\Delta_t| \le \mathbb{E}_{x_{<t}\sim\tilde{Q}_{<t}}\bigl[
\mathrm{TV}_t(x_{<t})\cdot\mathrm{osc}\bigl(\mu_t(\cdot;x_{<t})\bigr)\bigr].
\end{align}
Summing over $t$ and applying the triangle inequality gives $|\mathbb{E}_{\tilde{Q}}[M]-\mathbb{E}_Q[M]|\le\Gamma$. The uniform bound~\eqref{eq:bd_bound_uniform} then follows by bounding $\mathrm{TV}_t(x_{<t})\le\mathrm{TV}_t$ and $\mathrm{osc}(\mu_t(\cdot;x_{<t}))\le\mathrm{osc}(\mu_t)$ pointwise for each prefix $x_{<t}$.
\end{proof}

\subsection{Proof of Proposition~\ref{prop:eta_mixing_deviation}}
\label{app:proof_eta_mixing_deviation}

\begin{proof}
Expanding the definitions of $\alpha_t^\eta$ and $R_t^\eta$ gives
\begin{align}
\tilde{q}_t^{\eta}&=(1-\eta_t)q_t+\eta_t\tilde{q}_t\notag\\
&+\eta_t(1-\eta_t)(\bar{\alpha}_t-\bar{\alpha}_t^{\mathrm{sd}})(R_t-R_t^{\mathrm{sd}}).
\end{align}
Therefore, by the triangle inequality,
\begin{align}
\mathrm{TV}(\tilde{q}_t^{\eta},q_t) &\le \eta_t\mathrm{TV}(\tilde{q}_t,q_t)\notag\\
&+\eta_t(1-\eta_t)|\bar{\alpha}_t-\bar{\alpha}_t^{\mathrm{sd}}|\mathrm{TV}(R_t,R_t^{\mathrm{sd}}).
\end{align}
Taking uniform upper bounds over prefixes and substituting this result into Theorem~\ref{thm:bd_bound} yields
\begin{align}
\left|\mathbb{E}_{\tilde{Q}^{\eta}}[M]\!-\!\mathbb{E}_{Q}[M]\right|\!\le\!\min\left\{\sum_{t=1}^{T}(\eta_t\mathrm{TV}_t+\epsilon_t)I_t,1\right\}.
\end{align}
\end{proof}

\section{Lossless Baseline}
\label{app:additional_results}

Table~\ref{tab:lossless_secure_comparison} compares the PRR and SPD of lossless SD, average lossy SD, and \sys across four tasks and two model pairs.

\begin{table}[h]
\centering
\caption{\textbf{Lossless baseline.}
PRR and SPD are measured relative to target decoding. Lossy and \sys (STEP, $L=1$) are averaged over six lossy-SD methods.}
\label{tab:lossless_secure_comparison}
\footnotesize
\setlength{\tabcolsep}{2pt}
\begin{tabular}{@{}lccc|ccc@{}}
\toprule
\multirow{2}{*}{Task}
& \multicolumn{3}{c|}{PRR$\uparrow$}
& \multicolumn{3}{c}{SPD$\uparrow$ ($\times$)} \\
\cmidrule(lr){2-4}\cmidrule(l){5-7}
& Lossless & Lossy & \sys
& Lossless & Lossy & \sys \\
\midrule
\multicolumn{7}{@{}l}{\emph{Qwen3-0.6B/8B}} \\
Jailbreak
& 1.019 & .782 & 1.052
& 1.049 & 1.543 & 1.569 \\
Prompt Inj.
& .915 & .700 & .814
& 1.057 & 1.439 & 1.425 \\
HumanEval
& .947 & .874 & .866
& 1.646 & 1.762 & 1.754 \\
GSM8K
& 1.000 & .932 & .978
& 1.807 & 1.818 & 1.811 \\
\midrule
\multicolumn{7}{@{}l}{\emph{Llama3-1B/8B}} \\
Jailbreak
& .909 & .682 & .724
& 1.326 & 1.736 & 1.754 \\
Prompt Inj.
& .983 & .755 & .854
& 1.364 & 1.444 & 1.430 \\
HumanEval
& .933 & .814 & .832
& 1.849 & 1.854 & 1.855 \\
GSM8K
& .976 & .915 & .930
& 1.764 & 1.853 & 1.852 \\
\bottomrule
\end{tabular}
\end{table}

\section{Ethics Considerations}
\label{app:ethics}

\paragraph{Stakeholder Analysis}
Consider three stakeholder groups:
\begin{icompact}
\item \emph{Model providers and infrastructure operators} who deploy LLM serving systems and develop serving frameworks (e.g., vLLM, TensorRT-LLM, and SGLang) that support lossy speculative decoding. Our findings reveal that relaxed verification, an efficiency optimization usually evaluated only on utility benchmarks, can substantially weaken jailbreak and prompt injection resilience. This can inform safer verification designs and the inclusion of security evaluation when tuning inference optimizations. The potential negative impact is that an adversary could preferentially target services that deploy aggressive relaxed verification.

\item \emph{End users} of LLM systems, whose safety depends on the target model's alignment being preserved at inference time. They may benefit from defenses that keep the system's behavior consistent with the safety-aligned target model, while unprotected lossy speculative decoding could expose them to more harmful responses or to unintended actions triggered by prompt injection in downstream applications.

\item \emph{Researchers and practitioners} studying LLM systems, who benefit from understanding that inference optimizations are security-relevant components. Our work characterizes the security--utility asymmetry and its origin in early-position draft tokens, and may motivate further defensive research, while this analysis could also guide attackers toward settings where existing defenses are weaker.
\end{icompact}

\paragraph{Mitigation}
We took steps to minimize harm to the identified stakeholders and promote defensive progress.
\begin{icompact}
\item \emph{For model providers and infrastructure operators, a practical defense}: Alongside the measurement, we propose \sys, which restores safety and security performance close to that of lossless speculative decoding while preserving efficiency and utility.

\item \emph{For end users, no real-world harm}: All experiments used public benchmarks (WildJailbreak, JailbreakBench, HarmBench, AdvBench, OpenPromptInjection, AgentDojo, HumanEval, and GSM8K) and open-weight models (Qwen3 and Llama3 families) deployed locally in controlled offline environments. We did not use private user data, interact with real users, or issue attack queries to production services. GPT-4o-mini is used only as an LLM-as-a-judge to classify model outputs, and we do not include harmful model generations in the paper.

\item \emph{For researchers and practitioners, no new attack technique}: All jailbreak and prompt injection inputs are drawn from existing public datasets, and our adaptive attack only applies a simple prefix-delaying transformation to them. We release the code of \sys and the Jailbreak-SD benchmark, which is constructed solely from existing public datasets, to support reproducible defensive research rather than to provide exploit guidance.
\end{icompact}

\paragraph{Recommended Future Safeguards}
Real-world systems should incorporate safeguards addressing risks to both infrastructure operators and end users. Specifically, prior to deployment, systems utilizing relaxed verification must be evaluated against safety and security benchmarks alongside standard utility metrics, with acceptance rates capped at levels where any security degradation remains strictly tolerable.

\paragraph{Justification for Research}
Lossy speculative decoding improves LLM inference efficiency and is already supported by mainstream serving frameworks, but its security implications remain largely unexplored. Identifying these risks and providing a practical defense is essential to ensure that efficiency gains do not silently compromise the safety and security of LLM systems.

\newpage

\section{Meta-Review}

The following meta-review was prepared by the program committee for the 2027
IEEE Symposium on Security and Privacy (S\&P) as part of the review process as
detailed in the call for papers.

\subsection{Summary}
The paper studies lossy speculative decoding and identifies a security–utility asymmetry. The resilience of the system against jailbreak and prompt-injection degrades substantially earlier and faster than standard utility metrics. This is attributed to the disproportionate impact (and susceptibility) of the early tokens. To address this, the work proposes a defence -- SecureSD -- the lowers attack effectiveness.

\subsection{Scientific Contributions}
\begin{itemize}
\item Identifies an Impactful Vulnerability
\item Provides a Valuable Step Forward in an Established Field
\end{itemize}

\subsection{Reasons for Acceptance}
\begin{enumerate}
\item The work identifies a new threat by observing that under lossy speculative decoding, the susceptibility of the attacks degrades faster than model utility.
\item The work introduces the first defence to address this -- SecureSD -- that relies on interpolating between the draft and judge models to get the output distribution closer to that of lossless speculative decoding. It leverages the known insight that early tokens have higher impact on attack effectiveness.
\end{enumerate}

\subsection{Noteworthy Concerns} 
\begin{enumerate} 
\item The security guarantees of SecureSD and the evaluation of the adaptive attacks are a good first step, but overall limited.
\end{enumerate}

\section{Response to the Meta-Review} 

We thank the reviewers and would like to clarify \sys's scope. \sys mitigates the additional security degradation introduced by lossy speculative decoding, with target-only decoding as its security reference. It does not claim to eliminate the target model’s existing vulnerabilities or provide universal adversarial robustness.

Our adaptive evaluation directly targets \sys's design by delaying malicious generation beyond its early-token correction window. The results in Table 12 show that \sys retains mitigation benefits under this defense-aware attack. We acknowledge that evaluating more diverse adaptive strategies is an important direction for future work.

\end{document}